\newtheorem{theorem}{Theorem}\newtheorem{lemma}{Lemma}\newtheorem{proposition}{Proposition}
\let\Paragraph=\paragraph \renewcommand{\paragraph}[1]{\Paragraph{{#1}.}}
\newcommand*{\cross}{\times}
\newcommand*{\N}{\mathbb{N}}
\newcommand*{\R}{\mathbb{R}}
\newcommand*{\E}{\mathbb{E}}
\newcommand*{\opt}{\mathsf{opt}}
\newcommand*{\nothing}{\mathsf{N}}
\newcommand*{\subseq}{\sqsubset}
\newcommand*{\set}[2]{\{\,{#1}~|~{#2}\,\}}
\newcommand*{\cons}{{:}}
\newcommand{\setless}{-}
\newcommand*{\mathsc}[1]{\textsc{{#1}}}
\newcommand*{\beforenothing}{\mathsf{active}}
\newcommand*{\substrg}{\prec}
\newcommand*{\behv}{\mathsf{behv}}
\newcommand*{\fix}{\mathsf{fix}}
\newcommand*{\strg}{\mathsf{strg}}
\newcommand*{\e}{e}
\newcommand*{\be}{b}
\newcommand*{\email}[1]{{\url{#1}}}
\DeclareMathOperator*{\preargmax}{\mathsf{argmax}}
\newcommand*{\argmax}{\preargmax\limits}
\title{On the Semantics of Purpose Requirements\\ in Privacy Policies}
\date{February 18, 2011}
\author{Michael Carl Tschantz \and Anupam Datta \and  Jeannette M. Wing}
\keywords{Privacy, Formal Methods}
\abstract{%
Privacy policies often place requirements on the purposes for which a
governed entity may use personal information.  For example,
regulations, such as HIPAA, require that hospital employees use
medical information for only certain purposes, such as treatment.
Thus, using formal or automated methods for enforcing privacy policies
requires a semantics of \emph{purpose requirements} to determine
whether an action is \emph{for} a purpose or not.
We provide such a semantics using a formalism based on
\emph{planning}.  We model planning using a modified version of Markov
Decision Processes, which exclude redundant actions for a formal definition
of \emph{redundant}.  We use the model to formalize when a sequence of
actions is \emph{only for} or \emph{not for} a purpose.  This
semantics enables us to provide an algorithm for automating auditing,
and to describe formally and compare rigorously previous enforcement
methods.}
\begin{document}

\maketitle

\section{Introduction}
\label{sec:intro}


\emph{Purpose} is a key concept for privacy policies.
For example, the European Union requires that~\cite{eu95directive}:
\begin{quote}
Member States shall provide that personal data must be [\ldots] collected for specified, explicit and legitimate purposes and not further processed in a way incompatible with those purposes. 
\end{quote}
The United States also has laws placing purpose requirements on information in some domains such as HIPAA~\cite{hipaa} for medical information and the Gramm-Leach-Bliley Act~\cite{glb} for financial records.
These laws and best practices motivate organizations to discuss in their privacy policies the purposes for which they will use information.

Some privacy policies warn users that the policy provider may use certain information for certain purposes.  For example,
the privacy policy of a medical provider states, ``We may disclose your [protected health information] for public health activities and purposes [\ldots]''~\cite{washington}.
Such warnings do not constrain the behavior of the policy provider.

Other policies that prohibit using certain information for a purpose do constrain the behavior of the policy provider.  Examples include the privacy policy of Yahoo!\ Email, which states that ``Yahoo!'s practice is \emph{not} to use the content of messages stored in your Yahoo!\ Mail account \emph{for} marketing purposes''~\cite[emphasis added]{yahoo-email-policy}.

Some policies even limit the use of certain information to an explicit list of purposes.  The privacy policy of The Bank of America states, ``Employees are authorized to access Customer Information \emph{for} business purposes \emph{only}.''~\cite[emphasis added]{bankam}.  The HIPAA Privacy Rule~\cite{hipaa} requires that covered entities (e.g., health care providers and business partners) only use or disclose protected health information about a patient with that patient's written authorization or:
\begin{quote}
[\ldots] for the following purposes or situations: (1) To the Individual [\ldots]; (2) Treatment, Payment, and Health Care Operations; (3) Opportunity to Agree or Object; (4) Incident to an otherwise permitted use and disclosure; (5) Public Interest and Benefit Activities; and (6) Limited Data Set for the purposes of research, public health or health care operations.
\end{quote}

These examples show that verifying that an organization obeys a privacy policy requires a semantics of \emph{purpose requirements}.  In particular, enforcement requires the ability to determine that the organization under scrutiny obeys at least two classes of purpose requirements.  As shown in the example rule from Yahoo!, the first requirement is that the organization does \emph{not} use certain sensitive information \emph{for} a given purpose.  The second, as the example rule from HIPAA shows, is that the organization uses certain sensitive information \emph{only for} a given list of purposes.  We call the first class of requirements \emph{prohibitive} (not-for) and the second class \emph{restrictive} (only-for). 
Each class requires determining whether the organization's behavior is \emph{for} a purpose or not, but they differ in whether this indicates a violation or compliance, respectively.

For example, consider a physician accessing a medical record.  Under the HIPAA Privacy Rule, the physician may access the record only for certain purposes such as treatment, research, and billing.  Thus, for an auditor (either internal or external) to determine whether the physician has obeyed the Privacy Rule requires the auditor to determine the purposes for which the physician accessed the record.  The auditor's ability to determine the purposes behind actions is limited since the auditor can only observe the behavior of the physician.  As a physician may perform the exact same actions for different purposes, the auditor can never be sure of the purposes behind an action.  However, if the auditor determines that the record access could not have possibly been for any of the purposes allowed under the Privacy Rule, then the auditor knows that the physician violated the policy.

Manual enforcement of these privacy policies is labor intensive and
error prone.  Thus, to reduce costs and make their operations more
trustworthy, organizations would like to automate the enforcement of
the privacy policies governing their operations; tool support for this 
activity is beginning to emerge in the market. For example, Fair
Warning offers automated services for the detection of privacy
breaches in a hospital setting~\cite{fairwarning}.  
Meanwhile, previous research has purposed formal methods to enforce purpose requirements~\cite{aksx02hippocratic,bbl05purpose,ha05language,a07beyond,bl08purpose,pgy08dynamic,jss09enforcing,nblbkkt10privacy,kwb11conditional}.  

However, each of these endeavors start by assuming that actions or sequences of actions are labeled with the purposes they are \emph{for}. They avoid analyzing the meaning of \emph{purpose} and provide no method of performing this labeling other than through intuition alone.  The absence of a formal semantics to guide this determination has hampered the development of methods for ensuring policy compliance.  Such a definition would provide insights into how to develop tools that identify suspicious accesses in need of detailed auditing and algorithms for determining which purposes an action could possibly be for.  Such a definition would also show which enforcement approaches are most accurate.  More fundamentally, such a definition could frame the scientific basis of a societal and legal understanding of purpose and of privacy policies that use the notion of purpose.  Such a foundation can, for example, guide implementers as they codify in software an organization's interpretation of internal and government-imposed privacy policies.

\subsection{Solution Approach}
The goal of this work is to study the meaning of \emph{purpose} in the context of enforcing privacy policies and propose formal definitions suitable for automating the enforcement of purpose requirements.  
Since post-hoc auditing provides the perspective often required to determine the purpose of an action, we focus on automated auditing. 
However, we believe our semantics is applicable to other formal methods and may also clarify informal reasoning.

We find that \emph{planning} is central to the meaning of purpose.  We see the role of planning in the definition of the sense of the word ``purpose'' most relevant to our work~\cite{oed}:
\begin{quote}
The object for which anything is done or made, or for which it exists; the result or effect intended or sought; end, aim.
\end{quote}
Similarly, work on cognitive psychology calls purpose ``the central determinant of behavior''~\cite[p19]{dkp96cognitive}.
If our auditors are concerned with rational auditees (the person or organization being audited), then we may assume the auditee uses a plan to determine what actions it will perform in its attempt to achieve its purposes.   We (as have philosophers~\cite{t66action}) conclude that if an auditee selects to perform an action $a$ while planning to achieve the purpose $p$, then the auditee's action $a$ is \emph{for the purpose} $p$.  
In this paper, we make these notions formal.

\subsection{Overview of Contributions}

We first present an example that illustrates key factors in determining whether an action is for a purpose or not.
We find that the auditor should model the auditee as an agent that interacts with an
\emph{environment model}.  The environment model shows how the actions
the auditee can perform affect the state of the environment.  It also
models how well each state satisfies each purpose that the modeled
auditee might possibly find motivating.  Limiting consideration to
one purpose, the environment model becomes a Markov
Decision Process (MDP) where the degree of satisfaction of that purpose is the reward function of the MDP.  If the auditee is
motivated to act by only that purpose, then the auditee's actions
must correspond to an optimal \emph{plan} for this MDP and these actions are \emph{for} that purpose.  Additionally, we use a stricter definition of optimal than standard MDPs to reject redundant actions that neither decrease nor increase the total reward.  We formalize this model in Section~\ref{sec:formalism}.

For example, consider a physician ordering a medical test and an auditor attempting to determine whether the physician could have ordered this test for the purpose of treatment
(and is therefore in compliance with the  HIPAA Privacy Rule).   The auditor would examine an MDP modeling the physician's environment with the quality of treatment as the reward function to be optimized.  If no optimal plans for this MDP involve ordering the test, then the auditor can conclude definitively that the physician did not order the test for treatment.

We make this auditing process formal in Section~\ref{sec:audit} where we discuss the ramifications of the auditor only observing the behaviors of the auditee and not the underlying planning process of the auditee that resulted in these behaviors.  
We show that in some circumstances, the auditor can still acquire enough information to determine that the auditee violated the privacy policy.
To do so, the auditor must first use our MDP model to construct all the possible behaviors that the privacy policy allows and then compare it with all the behaviors of the auditee that could have resulted in the observed auditing log.  
Section~\ref{sec:algo} presents an algorithm for auditing based on our formal definitions, illustrating the relevance of our work.

The semantics discussed thus far is sufficient to put the previous work on enforcing privacy policies on firm semantic ground.  In Section~\ref{sec:compare}, we do so and discuss the strengths and weaknesses of each such approach.  In particular, we find that each approach may be viewed as a method of enforcing the policy given the set of all possible allowed behaviors, an intermediate result of our analysis.
  We compare the previous auditing approaches, which differ in their trade-offs between auditing complexity and accuracy of representing this set of behaviors.

Most auditees are actually interested in multiple purposes and select plans that simultaneously satisfy as many of the desired purposes as possible.  Handling the interactions between purposes complicates our semantics.  In particular, actions selected by a single plan may be for different purposes.
In Section~\ref{sec:multi-purposes}, we present examples showing when our semantics can extend to handle multiple purposes and when difficulties arise in determining which purposes an action is for when an auditee is attempting to satisfy various purposes at once. 
Currently, the state-of-the-art in the understanding of human planning limits our abilities to improve upon our semantics.  
However, as this understanding improves, one may replace our MDP-like formalism with more detailed ones while retaining our 
general framework of defining purpose in terms of planning.

We end by discussing other related work, future work, and conclusions.
Our contributions include:
\begin{itemize}
\item The first semantic formalism of when a sequence of actions is for a purpose,
\item An auditing algorithm for this formalism,
\item The resituating of previous policy enforcement methods in our formalism and a comparative study of their expressiveness, and 
\item The first attempt to formally consider the effects on auditing caused by interactions among multiple purposes.
\end{itemize}

Although motivated by our goal to formalize the notions of \emph{use}
and \emph{purpose} prevalently found in privacy policies, our
work is more generally applicable to a broad range of
policies, such as fiscal policies governing travel reimbursement.

\section{Motivation of Our Approach}
\label{sec:motivation}

We start with an informal example that suggests that \emph{an action is for a purpose if the action is part of a plan for achieving that purpose}.
Consider a physician
working at a hospital who, as a specialist, also owns a private practice that tests for bone damage using a novel technique for extracting information from X-ray images.  After seeing a patient and taking an X-ray, the physician forwards the patient's
medical record including the X-ray to his private practice to apply
this new technology.  As this action entails the transmission of
protected health information, the physician will have violated HIPAA
if this transmission is not for one of the purposes HIPAA allows.  The
physician would also run afoul of the hospital's own policies
governing when outside consultations are permissible unless this
action was for a legitimate purpose.  
Finally, the patient's insurance will only reimburse the costs associated with this consultation if a medical reason (purpose) exists for them.  
The physician claims that this consultation was for reaching a diagnosis.  As such, it is for the purpose of treatment and, therefore, allowed under each of these policies.  The hospital auditor, however, has selected this
action for investigation since the physician's
 making a referral to his own private practice makes the alternate motivation of profit possible.

Whether or not the physician violated these policies depends
upon details not presented in the above description.  For example, we
would expect the auditor to ask questions such as:
(1) Was the test relevant to the patient's condition? 
(2) Did the patient benefit medically from having the test?
(3) Was this the best option for the patient?
We will introduce these details as we introduce each of the factors relevant to the purposes behind the physician's actions.

\paragraph{States and Actions}
Sometimes the purposes for which an action is taken depend upon the previous actions and the state of the system.  
In the above example, whether or not the test is relevant depends upon the condition of the patient, that is, the state that the patient is in.

While an auditor could model the act of transmitting the record as two (or more) different actions based upon the state of the patient, 
modeling two concepts with one formalism could introduce errors.
  A better approach is to model the state of the system.  The state captures the context in which the physician takes an action and allows for the purposes of an action to depend upon the actions that precede it.

The physician's own actions also affect the state of the system and, thus, the purposes for which his actions are.  For example, had the physician transmitted the patient's medical record before taking the X-ray, then the transmission could not have been for treatment since the physician's private practice only operates on X-rays and would have no use for the record without the X-ray.

The above example illustrates that when an action is for a purpose, the action is part of a sequence of actions that can lead to a state in which some goal associated with the purpose is achieved.  In the example, the goal is reaching a diagnosis.  Only when the X-ray is first added to the record is this goal reached.

\paragraph{Non-redundancy}
Some actions, however, may be part of such a sequence without actually
being for the purpose.  For example, suppose that the patient's X-ray
clearly shows the patient's problem.  Then, the physician can reach a
diagnosis without sending the record to the private practice.  Thus,
while both taking the X-ray and sending the medical record might be part of
a sequence of actions that leads to achieving a diagnosis, the
transmission does not actually contribute to achieving the diagnosis:
the physician could omit it and the diagnosis could still be reached.

From this example, it may be tempting to conclude that an action is
\emph{for} a purpose only if that action is \emph{necessary} to
achieve that purpose.  However, consider a physician who has a choice
between two specialists to whom to send the medical record and must do so to reach a diagnosis.  
In this scenario, the physician's sending the  record to the first specialist is not necessary since he could send it to the second.  Likewise, sending the record to the second specialist is not necessary.  Yet, the physician must send the record to one or the other specialist and that transmission will be for the purpose of diagnosis.  Thus, an action may be for a purpose without being necessary for achieving the purpose.

Rather than \emph{necessity}, we use the weaker notion of \emph{non-redundancy} found in work on the semantics of \emph{causation} (e.g.,~\cite{m74cement}).  Given a sequence of actions that achieves a goal, an action in it is \emph{redundant} if that sequence with that action removed (and otherwise unchanged) also achieves the goal.
An action is \emph{non-redundant} if removing that action from the sequence would result in the goal no longer being achieved.
  Thus, non-redundancy may be viewed as necessity under an otherwise fixed sequence of actions.

For example, suppose the physician decides to send the medical record to the first specialist.  Then, the sequence of actions modified by removing this action would not lead to a state in which a diagnosis is reached.  Thus, the transmission of the medical record to the first specialist is non-redundant.  However, had the X-ray revealed to the physician the diagnosis without needing to send it to a specialist, the sequence of actions that results from removing the transmission from the original sequence would still result in a diagnosis.  Thus, the transmission would be redundant.

\paragraph{Quantitative Purposes}
Above we implicitly presumed that the diagnosis from each specialist had equal quality.
This need not be the case.  Indeed, many purposes are actually fulfilled to varying degrees.  For example, the purpose of marketing is never completely achieved since there is always more marketing to do.  
Thus, we model a purpose by assigning to each state-action pair a number that describes how well that action fulfills that purpose when performed in that state.  We require that the physician selects the test that maximizes the quality of the diagnosis as determined by total purpose score accumulated over all his actions.

\paragraph{Probabilistic Systems}
The success of many medical tests and procedures is probabilistic.  For example, with some probability the physician's test may fail to reach a diagnosis.  The physician would still have transmitted the medical record for the purpose of diagnosis even if the test failed to reach one.  This possibility affects our semantics of purpose: now an action may be for a purpose even if that purpose is never achieved.  

To account for such probabilistic events, we model the environment in which the physician operates as probabilistic.  For an action to be for a purpose, we require that there be a non-zero probability of the purpose being achieved and that the physician attempts to maximize the expected reward.  In essence, we require that the physician attempts to achieve a diagnosis.  Thus, the auditee's \emph{plan} determines the purposes behind his actions rather than just the actions themselves.

\section{Planning for a Purpose}
\label{sec:formalism}

In this section, we present a formalism for planning that accounts for quantitative purposes, probabilistic systems and non-redundancy.
We start by modeling the environment in which the auditee operates as a Markov Decision Process (MDP)---a natural model for probabilistic systems. 
The reward function of the MDP quantifies the degree of satisfaction of a purpose upon taking an action from a state. 
If the auditee is motivated to action by only that purpose, then the auditee's actions
must correspond to an optimal \emph{plan} for this MDP and these actions are \emph{for} that purpose.  
We develop a stricter definition of optimal than standard MDPs, which we call NMDPs for \emph{Non-redundant MDP}, 
to reject redundant actions that neither decrease nor increase the total reward.
We end with an example illustrating the use of an NMDP to model an audited environment.

\subsection{Markov Decision Processes}
\label{sec:mdps}

An MDP may be thought of as a probabilistic automaton where transitions are labeled with a reward in addition to an action.  Rather than having accepting or goal states, the ``goal'' of a MDP is maximizing the total reward over time.

An MDP is a tuple $m = \langle \mathcal{Q}, \mathcal{A}, t, r, \gamma\rangle$ where $\mathcal{Q}$ is a set of states, $\mathcal{A}$ is a set of actions, $t : \mathcal{Q} \cross \mathcal{A} \to \mathcal{D}(\mathcal{Q})$ a transition function from a state and an action to a distribution over states (represented as $\mathcal{D}(\mathcal{Q})$), $r : \mathcal{Q} \cross \mathcal{A} \to \R$ a reward function, and $\gamma$ a discount factor such that $0 < \gamma < 1$.  For each state $q$ in $\mathcal{Q}$, the agent using the MDP to plan selects an action $a$ from $\mathcal{A}$ to perform.  Upon performing the action $a$ in the state $q$, the agent receives the reward $r(q,a)$.  The environment then transitions to a new state $q'$ with probability $\mu(q')$ where $\mu$ is the distribution provided by $t(q,a)$.  The goal of the agent is to select actions to maximize its expected total discounted reward
$ \E\left[\sum_{i=0}^{\infty} \gamma^i \rho_i\right] $
where $i \in \N$ (the set of natural numbers) ranges over time modeled as discrete steps, $\rho_i$ is the reward at time $i$, and the expectation is taken over the probabilistic transitions.  

We formalize the agent's plan as a \emph{stationary strategy} (commonly called a ``policy'', but we reserve that word for privacy policies).  A stationary strategy is a function $\sigma$ from the state space $\mathcal{Q}$ to the set $\mathcal{A}$ of actions (i.e., $\sigma : \mathcal{Q} \to \mathcal{A}$) such that at a state $q$ in $\mathcal{Q}$, the agent always selects to perform the action $\sigma(q)$.  
Given a strategy $\sigma$ for an MDP $m$, its expected total discounted reward is 
\[ V_m(\sigma, q) = r(q,\sigma(q)) + \gamma \sum_{q' \in \mathcal{Q}} t(q,\sigma(q))(q') * V_m(\sigma, q') \]
The agent selects one of the strategies that optimizes this equation.
We denote this set of optimal strategies as $\opt(\langle \mathcal{Q}, \mathcal{A}, t, r, \gamma\rangle)$, or when the transition system is clear from context, as $\opt(r)$.  Such strategies are sufficient to maximize the agent's expected total discounted reward despite only depending upon the MDP's current state. 

Given the strategy $\sigma$ and the actual results of the probabilistic transitions yielded by $t$, the agent exhibits an \emph{execution}.  We represent this execution as an infinite sequence $\e = [q_1, a_1, q_2, a_2, \ldots ]$ of alternating states and actions starting with a state, where $q_i$ is the $i$th state that the agent was in and $a_i$ is the $i$th action the agent took, for all $i$ in $\N$.  We say an execution $\e$ is \emph{consistent} with a strategy $\sigma$ iff $a_{i} = \sigma(q_i)$ for all $i$ in $\N$ where $a_i$ is the $i$th action in $\e$ and $q_i$ is the $i$th state in $\e$.  We call a finite prefix of an execution a \emph{behavior}.  A behavior is consistent with a strategy if it can be extended to an execution consistent with that strategy. 

Under this formalism, the auditee plays the role of the agent optimizing the MDP to plan.  We presume that each purpose may be modeled as a reward function.  That is, we assume the degree to which a purpose is satisfied may be captured by a function from states and actions to a real number.  The higher the number, the higher the degree to which that purpose is satisfied.  When the auditee wants to plan for a purpose $p$, it uses a reward function, $r^p$, such that $r^p(q,a)$ is the degree to which taking the action $a$ from state $q$ aids the purpose $p$.  We also assume that the expected total discounted reward can capture the degree to which a purpose is satisfied over time.  We say that the auditee plans \emph{for} the purpose $p$ when the auditee adopts a strategy $\sigma$ that is optimal for the MDP $\langle \mathcal{Q}, \mathcal{A}, t, r^p, \gamma\rangle$.  The appendix provides additional background information on MDPs.

\subsection{Non-redundancy}
\label{sec:non-redun}

MDPs do not require that strategies be non-redundant.  Even given that the auditee had an execution $\e$ from using a strategy $\sigma$ in $\opt(r^p)$, some actions in $\e$ might not be \emph{for} the purpose $p$.  The reason is that some actions may be redundant despite being costless.  The MDP optimization criterion behind $\opt$ prevents redundant actions from delaying the achievement of a goal as the reward associated with that goal would be further discounted making such redundant actions sub-optimal.  However, the optimization criterion is not affected by redundant actions when they appear after all actions that provide non-zero rewards.  Intuitively, the hypothetical agent planning only for the purpose in question would not perform such unneeded actions even if they have zero reward.  
Thus, to create our formalism of non-redundant MDPs (NMDPs), we replace $\opt$ with a new optimization criterion $\opt^*$ that prevents these redundant actions while maintaining the same transition structure as a standard MDP.

To account for redundant actions, we must first contrast that with doing nothing.  Thus, we introduce a distinguished action $\nothing$ that stands for doing nothing.   For all states $q$, $\nothing$ labels a transition with zero reward (i.e., $r(q,\nothing) = 0$) that is a self-loop (i.e., $t(q,\nothing)(q) = 1$).
(We could put $\nothing$ on only the subset of states that represent possible stopping points by slightly complicating our formalism.) 
Since we only allow deterministic stationary strategies and $\nothing$ only labels self-loops, this decision is irrevocable: once nothing is done, it is done forever.  
As selecting to do nothing results in only zero rewards henceforth, it may be viewed as stopping with the previously acquired total discounted reward.

Given an execution $\e$, let $\beforenothing(\e)$ denote the prefix of $\e$ before the first instance of the nothing actions.  $\beforenothing(\e)$ will be equal to $\e$ in the case where $\e$ does not contain the nothing action.

We use the idea of \emph{nothing} to make formal when one execution intuitively contain more actions than another despite both being of infinite length.  An execution $\e_1$ is a \emph{proper sub-execution} of an execution $\e_2$ if and only if $\beforenothing(\e_1)$ is a proper subsequence of $\beforenothing(\e_2)$ using the standard notion of subsequence.  Note if $\e_1$ does not contain the nothing action, it cannot be a proper sub-execution of any execution.

To compare strategies, we construct all the executions they could produce.  To do so, let a \emph{contingency} $\kappa$ be a function from $\mathcal{Q} \cross \mathcal{A} \cross \N$ to $\mathcal{Q}$ such that $\kappa(q,a,i)$ is the state that results from taking the action $a$ in the state $q$ the $i$th time.  We say that a contingency $\kappa$ is \emph{consistent} with an MDP iff $\kappa$ only picks states to which the transition function $t$ of the MDP assigns a non-zero probability to (i.e., for all $q$ in $\mathcal{Q}$, $a$ in $\mathcal{A}$, and $i$ in $\N$, $t(q,a)(\kappa(q,a,i)) > 0$).  Given an MDP $m$, let $m(q,\kappa)$ be the possibly infinite state model that results of having $\kappa$ resolve all the probabilistic choices in $m$ and having the model start in state $q$.  Let $m(q,\kappa, \sigma)$ denote the execution that results from using the strategy $\sigma$ and state $q$ in the non-probabilistic model $m(q,\kappa)$.  Henceforth, we only consider contingencies consistent with the model under discussion. 

Given two strategies $\sigma$ and $\sigma'$, we write $\sigma' \substrg \sigma$ if and only if for all contingencies $\kappa$ and states $q$, $m(q,\kappa,\sigma')$ is a proper sub-execution of or equal to $m(q,\kappa,\sigma)$, and for at least one contingency $\kappa'$ and state $q'$, $m(q',\kappa', \sigma')$ is a proper sub-execution $m(q',\kappa', \sigma)$.  Intuitively, $\sigma'$ proves that $\sigma$ produces a redundant execution under $\kappa'$ and $q'$.
We define $\opt^*(r)$ to be the subset of $\opt(r)$ holding only strategies $\sigma$ such that for no $\sigma' \in \opt(r)$ does $\sigma' \substrg \sigma$.  The following theorem, proved in the appendix, shows that non-redundant optimal strategies always exist.
 
\begin{theorem}\label{thm:opt-not-empty}
For all environment models $m$, $\opt^*(m)$ is not empty.
\end{theorem}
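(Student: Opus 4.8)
The plan is to identify $\opt^*(m)$ with the set of $\substrg$-minimal elements of $\opt(m)$ and to show this set is nonempty. Write $\e_1 \preceq \e_2$ for ``$\e_1$ is a proper sub-execution of, or equal to, $\e_2$''; then $\sigma' \substrg \sigma$ means $m(q,\kappa,\sigma') \preceq m(q,\kappa,\sigma)$ for every state $q$ and contingency $\kappa$, with $m(q',\kappa',\sigma')$ a proper sub-execution of $m(q',\kappa',\sigma)$ for at least one pair $(q',\kappa')$. By the definition of $\opt^*$, a strategy lies in $\opt^*(m)$ iff it lies in $\opt(m)$ and no member of $\opt(m)$ is $\substrg$-below it; that is, $\opt^*(m)$ is exactly the set of $\substrg$-minimal members of $\opt(m)$. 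Since $\opt(m)$ is nonempty by the standard theory of discounted MDPs (optimal stationary strategies exist, as recalled in the appendix), it suffices to show that a nonempty $\opt(m)$ has a $\substrg$-minimal element.

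First I would check that $\substrg$ is a strict partial order on stationary strategies. Irreflexivity is immediate: $\sigma \substrg \sigma$ would require $m(q,\kappa,\sigma)$ to be a proper sub-execution of itself for some $(q,\kappa)$. For transitivity, observe that $\preceq$ is transitive on executions (via $\beforenothing$ it reduces to ``subsequence of, or equal to'' on sequences, which is transitive), and that if $\e_1$ is a proper sub-execution of $\e_2$ and $\e_2 \preceq \e_3$, or $\e_1 \preceq \e_2$ and $\e_2$ is a proper sub-execution of $\e_3$, then $\e_1$ is a proper sub-execution of $\e_3$; hence $\sigma'' \substrg \sigma' \substrg \sigma$ gives $m(q,\kappa,\sigma'') \preceq m(q,\kappa,\sigma)$ for all $(q,\kappa)$ and a strictness witness for $\sigma'' \substrg \sigma'$ yields one for $\sigma'' \substrg \sigma$. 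Here I would lean on the remark from Section~\ref{sec:non-redun} that an execution not containing the $\nothing$ action cannot be a proper sub-execution of any execution, which confines all of this subsequence bookkeeping to finite $\beforenothing$-prefixes, where it is routine.

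The crux is to show the minimal element exists, equivalently that the greedy descent --- start from any $\sigma_0 \in \opt(m)$ and, as long as the current strategy is not yet in $\opt^*(m)$, replace it by some strictly $\substrg$-smaller strategy of $\opt(m)$ --- terminates. When $\mathcal{Q}$ and $\mathcal{A}$ are finite this is immediate: there are only finitely many stationary strategies, so $\opt(m)$ is finite, and a nonempty finite set carrying a strict partial order has a minimal element (an infinite strictly descending chain would repeat a strategy, and then transitivity together with irreflexivity would be violated). This finite-model case is the setting the paper cares about and is what I would actually write out. For an unrestricted state space one would instead prove $\substrg$ well-founded on $\opt(m)$ directly: down any descending chain each active prefix $\beforenothing(m(q,\kappa,\cdot))$ is non-increasing in the subsequence order, the set of pairs $(q,\kappa)$ whose execution reaches $\nothing$ only grows, and once that set stabilizes each further strict step strictly shortens some finite active prefix. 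Turning this into a single well-founded measure despite having no a priori bound on active-prefix lengths (nor on the number of relevant pairs) is the genuine obstacle and the part I expect to cost real work; in the finite-model regime the finite-poset argument sidesteps it entirely.
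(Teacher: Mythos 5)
Your proposal is correct and follows essentially the same route as the paper's proof: establish that $\substrg$ is a strict partial order (irreflexivity and transitivity, with the strictness witness propagated through transitivity of the sub-execution relation), then use finiteness of the stationary strategy space --- hence of the nonempty set $\opt(m)$ --- to conclude a $\substrg$-minimal element exists, which is precisely a member of $\opt^*(m)$. Your added remark about the infinite-state case goes beyond what the paper addresses (it silently assumes $\mathcal{Q} \to \mathcal{A}$ finite), but the core argument is the same.
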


\subsection{Example}
\label{sec:formalism-ex}

Suppose an auditor is inspecting a hospital and comes across a physician referring a medical record to his own private practice for analysis of an X-ray as described in Section~\ref{sec:motivation}.  As physicians may only make such referrals for the purpose of treatment ($\mathsf{treat}$), the auditor may find the physician's behavior suspicious.  To investigate, the auditor may formally model the hospital using our formalism.  

The auditor would construct the NMDP $m_{\mathsf{ex}} = \langle Q_{\mathsf{ex}}, A_{\mathsf{ex}}, t_{\mathsf{ex}}, r_{\mathsf{ex}}^{\mathsf{treat}}, \gamma_{\mathsf{ex}}\rangle$ shown in Figure~\ref{fig:mot-ex}.
\begin{figure*}
\begin{center}
\input{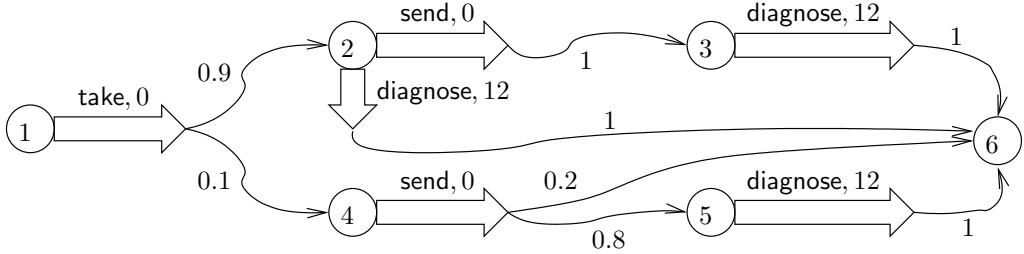}
\end{center}
\caption{The environment model $m_{\mathsf{ex}}$ that the physician used.  Circles represent states, block arrows denote possible actions, and squiggly arrows denote probabilistic outcomes.  Self-loops of zero reward under all actions, including the special action $\nothing$, are not shown.}\label{fig:mot-ex}
\end{figure*}
The figure conveys all components of the NMDP except $\gamma_{\mathsf{ex}}$.  For instance, the block arrow from the state $1$ labeled $\mathsf{take}$ and the squiggly arrows leaving it denote that after the agent performs the action $\mathsf{take}$ from state $1$, the environment will transition to the state $2$ with probability $0.9$ and to state $4$ with probability of $0.1$ (i.e., $t_{\mathsf{ex}}(1,\mathsf{take})(2) = 0.9$ and $t_{\mathsf{ex}}(1,\mathsf{take})(4) = 0.1$).  The number over the block arrow further indicates the degree to which the action satisfies the purpose of $\mathsf{treat}$.  In this instance, it shows that $r_{\mathsf{ex}}^{\mathsf{treat}}(1,\mathsf{take}) = 0$.  This transition models the physician taking an X-ray.  With probability $0.9$, he is able to make a diagnosis right away (from state $2$); with probability $0.1$, he must send the X-ray to his practice to make a diagnosis.  Similarly, the transition from state $4$ models that his practice's test has a $0.8$ success rate of making a diagnosis; with probability $0.2$, no diagnosis is ever reached.

Using the model, the auditor computes $\opt(r^{\mathsf{treat}}_{\mathsf{ex}})$, which consists of those strategies that maximizes the expected total discounted degree of satisfaction of the purpose of treatment where the expectation is over the probabilistic transitions of the model.  $\opt(r^{\mathsf{treat}}_{\mathsf{ex}})$ includes the appropriate strategy 
$\sigma_{1}$ where $\sigma_{1}(1) = \mathsf{take}$, 
$\sigma_{1}(4) = \mathsf{send}$, 
$\sigma_{1}(2) = \sigma_{1}(3) = \sigma_{1}(5) = \mathsf{diagnose}$, and
$\sigma_{1}(6) = \nothing$.
Furthermore, $\opt(r^{\mathsf{treat}}_{\mathsf{ex}})$ excludes the redundant strategy $\sigma_2$ that performs a redundant $\mathsf{send}$ where $\sigma_2$ is the same as $\sigma_1$ except for $\sigma_2(2) = \mathsf{send}$.  Performing the extra action $\mathsf{send}$ delays the reward of $12$ for achieving a diagnosis resulting in its discounted reward being $\gamma_{\mathsf{ex}}^2 * 12$ instead of $\gamma_{\mathsf{ex}} * 12$ and, thus, the strategy is not optimal.

However, $\opt(r^{\mathsf{treat}}_{\mathsf{ex}})$ does include the redundant strategy $\sigma_3$ that is the same as $\sigma_1$ except for $\sigma_3(6) = \mathsf{send}$.   $\opt(r^{\mathsf{treat}}_{\mathsf{ex}})$ includes this strategy despite the $\mathsf{send}$ actions from state $6$ being redundant since no positive rewards follow the $\mathsf{send}$ actions.  Fortunately, $\opt^*(r^{\mathsf{treat}}_{\mathsf{ex}})$ does not include $\sigma_3$ since $\sigma_1$ is both in $\opt(r^{\mathsf{treat}}_{\mathsf{ex}})$ and $\sigma_1 \substrg \sigma_3$.  To see that $\sigma_1 \substrg \sigma_3$ note that for every contingency $\kappa$ and state $q$, the $m_{\mathsf{ex}}(q,\kappa, \sigma_1)$ has the form $\be$ followed by an finite sequence of nothing actions (interleaved with the state $6$) for some finite prefix $\be$.  For the same $\kappa$, $m_{\mathsf{ex}}(q,\kappa, \sigma_3)$ has the form $\be$ followed by an infinite sequence of $\mathsf{send}$ actions (interleaved with the state $6$) for the same $\be$.  Thus, $m_{\mathsf{ex}}(q,\kappa, \sigma_1)$ is a proper sub-execution of $m_{\mathsf{ex}}(q,\kappa, \sigma_3)$.

\section{Auditing}
\label{sec:audit}

In the above example, the auditor constructed a model of the environment in which the auditee operates.  The auditor must use the model to determine if the auditee obeyed the policy.  We first discuss this process for auditing restrictive policy rules and revisit the above example.  Then, we discuss the process for prohibitive policy rules.  In the next section, we provide an auditing algorithm that automates comparing the auditee's behavior, as recorded in a log, to the set of allowed behaviors.

\subsection{Auditing Restrictive Rules}
\label{sec:audit-only}

Suppose that an auditor would like to determine whether an auditee performed some logged actions \emph{only for} the purpose $p$.  The auditor can compare the logged behavior to the behavior that a hypothetical agent would perform when planning for the purpose $p$.  In particular, the hypothetical agent selects a strategy from $\opt^*(\langle \mathcal{Q}, \mathcal{A}, t, r^p, \gamma\rangle)$ where $\mathcal{Q}$, $\mathcal{A}$, and $t$ models the environment of the auditee; $r^p$ is a reward function modeling the degree to which the purpose $p$ is satisfied; and $\gamma$ is an appropriately selected discounting factor.  If the logged behavior of the auditee would never have been performed by the hypothetical agent, then the auditor knows that the auditee violated the policy.

In particular, the auditor must consider all the possible behaviors the hypothetical agent could have performed.  For a model $m$, let $\behv^*(r^p)$ represent this set where a finite prefix $\be$ of an execution is in $\behv^*(r^p)$ if and only if there exists a strategy $\sigma$ in $\opt^*(r^p)$, a contingency $\kappa$, and a state $q$ such that $\be$ is a subsequence of $m(q, \kappa, \sigma)$.

The auditor must compare $\behv^*(r^p)$ to the set of all behaviors that could have caused the auditor to observe the log that he did.  
We presume that the log $\ell$ was created by a process $\mathsf{log}$ that records features of the current behavior.  That is, $\mathsf{log}{:}\, B \to L$ where $B$ is the set of behaviors and $L$ the set of logs, and $\ell = \mathsf{log}(\be)$ where $\be$ is the prefix of the actual execution of the environment available at the time of auditing.
The auditor must consider all the behaviors in $\mathsf{log}^{-1}(\ell)$ as possible where $\mathsf{log}^{-1}$ is the inverse of the logging function.
In the best case for the auditor, the log records the whole prefix $\be$ of the execution that transpired until the time of auditing, in which case $\mathsf{log}^{-1}(\ell) = \{\ell\}$.

If $\mathsf{log}^{-1}(\ell) \cap \behv^*(r^p)$ is empty, then the auditor may conclude that the auditee did not plan for the purpose $p$, and, thus, violated the rule that auditee must only perform the actions recorded in $\ell$ for the purpose $p$; otherwise, the auditor must consider it possible that the auditee planned for the purpose $p$.

If $\mathsf{log}^{-1}(\ell) \subseteq \behv^*(r^p)$, the auditor might be tempted to conclude that the auditee surely obeyed the policy rule.  However, as illustrated in the second example below, this is not necessarily true.  The problem is that $\mathsf{log}^{-1}(\ell)$ might have a non-empty intersection with $\behv^*(r^{p'})$ for some other purpose $p'$.  In this case, the auditee might have been actually planning for the purpose $p'$ instead of $p$.  Indeed, given the likelihood of such other purposes for non-trivial scenarios, we consider proving compliance practically impossible.
However, this incapability is of little consequence: $\mathsf{log}^{-1}(\ell) \subseteq \behv^*(r^p)$ does imply that the auditee is behaving as though he is obeying the policy.  That is, in the worse case, the auditee is still doing the right things even if for the wrong reasons.

\subsection{Example}
\label{sec:audit-only-ex}

Below we revisit the example of Section~\ref{sec:formalism-ex}.  We consider two cases.  In the first, the auditor shows that the physician violated the policy.  In the second, auditing is inconclusive.

\paragraph{Violation Found}
Suppose after constructing the model as above in Section~\ref{sec:formalism-ex}, the auditor maps the actions recorded in the access log $\ell_1$ to the actions of the model $m_{\mathsf{ex}}$, and finds $\mathsf{log}^{-1}(\ell_1)$ holds only a single behavior: $\be_{1} = [1, \mathsf{take}, 2, \mathsf{send},  3, \mathsf{diagnose}, 6, \nothing, 6]$.  
Next, using $\opt^*(r^{\mathsf{treat}}_{\mathsf{ex}})$, as computed above, the auditor constructs the set $\behv^*(r^{\mathsf{treat}}_{\mathsf{ex}})$ of all behaviors an agent planning for treatment might exhibit.  The auditor would find that $\be_{1}$ is not in $\behv^*(r^{\mathsf{treat}}_{\mathsf{ex}})$.  

To see this, note that every execution $\e_1$ that has $\be_1$ as a prefix is generated from a strategy $\sigma$ such that $\sigma(2) = \mathsf{send}$.  The strategy $\sigma_2$ from Section~\ref{sec:formalism-ex} is one such strategy.  None of these strategies are members of $\opt(r^{\mathsf{treat}}_{\mathsf{ex}})$ for the same reason as $\sigma_2$ is not a member.  Thus, $\be_1$ cannot be in $\behv^*(r^{\mathsf{treat}}_{\mathsf{ex}})$.  As $\mathsf{log}^{-1}(\ell) \cap \behv^*(r^{\mathsf{treat}}_{\mathsf{ex}})$ is empty, the audit reveals that the physician violated the policy.

\paragraph{Inconclusive}
Now suppose that the auditor sees a different log $\ell_2$ such that $\mathsf{log}^{-1}(\ell_2) = \{ \be_2 \}$ where $\be_{2} = [1, \mathsf{take}, 4, \mathsf{send}, 5, \mathsf{diagnose}, 6, \nothing, 6]$.  In this case, our formalism would not find a violation since $\be_{2}$ is in $\behv^*(r^{\mathsf{treat}}_{\mathsf{ex}})$.  In particular,  the strategy $\sigma_{1}$ from above produces the behavior $\be_2$ under the contingency that selects the bottom probabilistic transition from state $1$ to state $4$ under the action $\mathsf{take}$.

Nevertheless, the auditor cannot be sure that the physician obeyed the policy.  
For example, consider the NMDP $m'_{\mathsf{ex}}$ that is $m_{\mathsf{ex}}$ altered to use the reward function $r^\mathsf{profit}_{\mathsf{ex}}$ instead of $r^\mathsf{treat}_{\mathsf{ex}}$.  $r^{\mathsf{profit}}_{\mathsf{ex}}$ assigns a reward of zero to all transitions except for the $\mathsf{send}$ actions from states $2$ and $4$, to which it assigns a reward of $9$.  $\sigma_{1}$ is in $\opt^*(r^{\mathsf{profit}}_{\mathsf{ex}})$ meaning that not only the same actions (those in $\be_{2}$), but even the exact same strategy can be either for the allowed purpose $\mathsf{treat}$ or the disallowed purpose $\mathsf{profit}$.  Thus, if the physician did refer the record to his practice for profit, he cannot be caught as he has tenable deniability of his ulterior motive of profit.

\subsection{Auditing Prohibitive Rules}
\label{sec:audit-not}

In the above example, the auditor was enforcing the rule that the physician's actions be \emph{only for} treatment.  Now, consider auditing to enforce the rule the that physician's actions are \emph{not for} personal profit.  After seeing the log $\ell$, the auditor could check whether $\mathsf{log}^{-1}(\ell) \cap \behv^*(r^{\mathsf{profit}}_{\mathsf{ex}})$ is empty.  If so, then the auditor knows that the policy was obeyed.  If not, then the auditor cannot prove nor disprove a violation.
In the above example, just as the auditor is unsure whether the actions were \emph{for} the required purpose of treatment, the auditor is unsure whether the actions are \emph{not for} the prohibited purpose of profit.  

An auditor might decide to investigate some of the cases where $\mathsf{log}^{-1}(\ell) \cap \behv^*(r^{\mathsf{profit}}_{\mathsf{ex}})$ is not empty.
In this case, the auditor could limit his attention to only those possible violations of a prohibitive rule that cannot be explained away by some allowed purpose.  For example, in the inconclusive example above, the physician's actions can be explained with the allowed purpose of treatment.  As the physician has tenable deniability, it is unlikely that investigating his actions would be a productive use of the auditor's time.  Thus, the auditor should limit his attention to those logs $\ell$ such that both $\mathsf{log}^{-1}(\ell) \cap \behv^*(r^{\mathsf{profit}}_{\mathsf{ex}})$ is non-empty and  $\mathsf{log}^{-1}(\ell) \cap \behv^*(r^{\mathsf{treat}}_{\mathsf{ex}})$ is empty.

A similar additional check using disallowed purposes could be applied to enforcing restrictive rules.  However, for restrictive rules, this check would identify cases where the auditee's behavior could have been either for the allowed purpose or a disallowed purpose.  Thus, it would serve to find additional cases to investigate and increase the auditor's workload rather than reduce it.  Furthermore, the auditee would have tenable deniability for these possible ulterior motives, making these investigations a poor use of the auditor's time.

\section{Auditing Algorithm}
\label{sec:algo}

We would like to automate the auditing process described above.  To this end, we present in Figure~\ref{fig:algo} an algorithm $\mathsc{Audit}$ that aids the auditor in comparing the log to the set of allowed behaviors.
As we are not interested in the details of the logging process and would like to focus on the planning aspects of our semantics, we limit our attention to the case where $\mathsf{log}(\be) = \be$.  As proved below (Theorem~\ref{thm:algo-correct}), $\mathsc{Audit}(m,\be)$ returns true if and only if $\mathsf{log}^{-1}(\be) \cap \behv^*(m)$ is empty.  
In the case of a restrictive rule, the auditor may conclude that the policy was violated when $\mathsc{Audit}$ returns true.  In case of a prohibitive rule, the auditor may conclude the policy was obeyed when $\mathsc{Audit}$ returns true.

\newcommand{\s}{\ \ \ }
\newcommand{\ls}{\ }
\begin{figure}
\mbox{}$\mathsc{Audit}(\langle \mathcal{Q}, \mathcal{A}, t, r, \gamma\rangle, [q_0, a_1, q_1, \ldots a_n, q_n])$:\\
\mbox{} 01 \ls $V^*_m := \mathsf{solveMDP}(\langle \mathcal{Q}, \mathcal{A}, t, r, \gamma\rangle)$\\
\mbox{} 02 \ls for($i := 0$; $i < n$; $i\mathtt{++}$):\\
\mbox{} 03 \ls\s if($a_{i+1} \neq \nothing$):\\
\mbox{} 04 \ls\s\s if($r[q_i][a_{i+1}] + \gamma \sum_{j = 0}^{|\mathcal{Q}|} t[q_i][a_{i+1}][j] * V^*_m[j] \leq 0$):\\
\mbox{} 05 \ls\s\s\s return true\\
\mbox{} 06 \ls $r^* := 0$\\
\mbox{} 07 \ls for($j := 0$; $j < |\mathcal{Q}|$; $j\mathtt{++}$):\\
\mbox{} 08 \ls\s for($k := 0$; $k < |\mathcal{A}|$; $k\mathtt{++}$):\\
\mbox{} 09 \ls\s\s $r'[j][k] := r[j][k]$\\
\mbox{} 10 \ls\s\s if($r^* < \mathsf{absoluteValue}(r[j][k])$:\\
\mbox{} 11 \ls\s\s\s $r^* := \mathsf{absoluteValue}(r[j][k])$\\
\mbox{} 12 \ls $\omega := 2*r^*/(1-\gamma) + 1$\\
\mbox{} 13 \ls for($i := 0$; $i < n$; $i\mathtt{++}$):\\
\mbox{} 14 \ls\s for($k := 0$; $k < |\mathcal{A}|$; $k\mathtt{++}$):\\
\mbox{} 15 \ls\s\s if($k \neq a_{i+1}$):\\
\mbox{} 16 \ls\s\s\s $r'[q_i][k] := -\omega$\\
\mbox{} 17 \ls $m' := \langle\mathcal{Q}, \mathcal{A}, t, r', \gamma\rangle$\\
\mbox{} 18 \ls $V^*_{m'} := \mathsf{solveMDP}(\langle \mathcal{Q}, \mathcal{A}, t, r', \gamma\rangle)$\\
\mbox{} 19 \ls for($j := 0$; $j < |\mathcal{Q}|$; $j\mathtt{++}$):\\
\mbox{} 20 \ls\s if($V^*_m[j] = V^*_{m'}[j]$):\\
\mbox{} 21 \ls\s\s return false\\
\mbox{} 22 \ls return true
\caption{The algorithm $\mathsc{Audit}$.  $\mathsf{solveMDP}$ may be any MDP solving algorithm.  The algorithm assumes functions are represented as arrays and states and actions are represented as indexes into these arrays.}
\label{fig:algo}
\end{figure}

$\mathsc{Audit}$ operates in two steps.  The first checks to make sure that the behavior $\be$ is not inherently redundant (lines 01--05).  If it is, then $\mathsf{log}^{-1}(\be) \cap \behv^*(m)$ will be empty and the algorithm returns true.  $\mathsc{Audit}$ checks $\be$ by comparing the actions taken in each state to doing nothing.  If the expected total discounted reward for doing nothing in a state $q$ is higher than that for doing the action $a$ in $q$, then $a$ introduces redundancy into any strategy $\sigma$ such that $\sigma(q) = a$.  Thus, if $b = [\ldots, q, a, \ldots]$, we may conclude that $\mathsf{log}^{-1}(\be) \cap \behv^*(m)$ is empty.

The second step compares the optimal values of two MDPs.  One of the them is the NMDP $m$ treated as an MDP, which is already optimized during the first step.  The other $m'$ is constructed from $m$ (lines 07--17) so that only the actions in the log $\be$ are selected during optimization.  If the expected total discounted reward of each of these MDPs is unequal, then $\mathsf{log}^{-1}(\be) \cap \behv^*(m)$ is empty.  

Below we formalize these ideas.  Lemma~\ref{lem:two-steps} justifies our two step approach while Lemmas~\ref{lem:useless-check} and~\ref{lem:fix-value-check} justify how we perform the first and second step, respectively.  They allow us to conclude the correctness of our algorithm in Theorem~\ref{thm:algo-correct}.  We defer proofs and additional propositions to the appendix.

\subsection{Useless States and the Two Steps}

We say an action is \emph{useless} at a state if taking it would always lead to redundancy.  Formally, let the set $U_m$ be the subset of $\mathcal{Q} \cross \mathcal{A}$ such that $\langle q, a\rangle$ is in $U_m$ if and only if  $a \neq \nothing$ and for all strategies $\sigma$, $Q_m(\sigma,q,a) \leq 0$ where $Q_m(\sigma, q, a) = r(q,a) + \gamma \sum_{q'} t(q,a)(q') * V_m(\sigma, q')$.

We call $\langle q, a\rangle$ in set $U_m$ \emph{useless} since any strategy $\sigma$ such that $\sigma(q) = a$ could be replaced by a strategy $\sigma'$ that is the same as $\sigma$ except for having $\sigma'(q) = \nothing$ without lowering the expected total discounted reward.  
To make this formal, let $U(\sigma)$ be a strategy such that $U(\sigma)(q) = \nothing$ if $\langle q, \sigma(q)\rangle \in U$ and $U(\sigma)(q) = \sigma(q)$ otherwise.  The following justifies calling these pairs \emph{useless}: for all $\sigma$ and $q$, $V_m(\sigma, q) \leq V_m(U_m(\sigma), q)$ (Proposition~\ref{prp:useless}).

We are also interested in the set $\strg(\be)$ of strategies that could have resulted in the behavior $b$: $\strg(\be) = \set{\sigma \in \mathcal{Q} \to \mathcal{A}}{\forall i < n . a_{i+1} = \sigma(q_{i})}$ where $\be = [q_0, a_1, q_1, a_2, \ldots, a_n, q_n]$.  

\begin{lemma}\label{lem:two-steps}
For all environment models $m$ and all behaviors $\be = [q_0, a_1, q_1, \ldots, a_n, q_n]$,
$\mathsf{log}^{-1}(\be) \cap \behv^*(m)$ is empty
if and only if
(1) there exists $i$ such that $0 \leq i < n$ and $\langle q_i, a_{i+1}\rangle \in U_m$ or
 (2) $\strg(b) \cap \opt(m)$ is empty,
\end{lemma}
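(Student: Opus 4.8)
The plan is to prove both directions of the biconditional by unfolding the definition of $\behv^*(m)$ and relating it to the two conditions. Recall $\be \in \behv^*(m)$ iff there is $\sigma \in \opt^*(m)$, a contingency $\kappa$, and a state $q$ with $\be$ a subsequence of $m(q,\kappa,\sigma)$; and since we are in the case $\mathsf{log}(\be) = \be$, we have $\mathsf{log}^{-1}(\be) = \{\be\}$, so $\mathsf{log}^{-1}(\be) \cap \behv^*(m)$ is empty exactly when no such $\sigma, \kappa, q$ exist. The two conditions on the right are: (1) some logged step $\langle q_i, a_{i+1}\rangle$ is useless, and (2) $\strg(\be) \cap \opt(m) = \emptyset$, i.e. no strategy consistent with the logged action choices is even (ordinarily) optimal.

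For the ``if'' direction I would argue the contrapositive is easy in each case. If (1) holds, then any strategy $\sigma$ producing an execution with $\be$ as a subsequence must have $\sigma(q_i) = a_{i+1}$ with $\langle q_i, a_{i+1}\rangle \in U_m$; by Proposition~\ref{prp:useless}, $U_m(\sigma)$ weakly improves the value everywhere, and since $\sigma \neq U_m(\sigma)$ at $q_i$ (as $a_{i+1} \neq \nothing$ by definition of $U_m$), one shows $U_m(\sigma) \substrg \sigma$ or at least that $\sigma \notin \opt^*(m)$ — more carefully, either $U_m(\sigma)$ is a strict value improvement somewhere (so $\sigma \notin \opt(m) \supseteq \opt^*(m)$), or it is a value-preserving change that removes the redundant action $a_{i+1}$, yielding a proper sub-execution and hence $\sigma \notin \opt^*(m)$. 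Either way no witnessing $\sigma \in \opt^*(m)$ can produce $\be$, so the intersection is empty. If (2) holds, then every $\sigma \in \strg(\be)$ fails to be in $\opt(m)$, and since $\opt^*(m) \subseteq \opt(m)$, no strategy in $\opt^*(m)$ can produce a behavior extending $\be$'s action choices; again the intersection is empty.

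For the ``only if'' direction, suppose neither (1) nor (2) holds; I must exhibit $\sigma \in \opt^*(m)$, $\kappa$, $q$ with $\be$ a subsequence of $m(q,\kappa,\sigma)$. Since (2) fails there is some $\sigma_0 \in \strg(\be) \cap \opt(m)$. The reachable states along $\be$ are $q_0, \ldots, q_n$; take $q = q_0$ and pick $\kappa$ so that it resolves the transitions from $\langle q_i, a_{i+1}\rangle$ to land on $q_{i+1}$ (possible since $\be$ is a genuine behavior, so each such transition has positive probability — this is where consistency of $\kappa$ with $m$ is used). Then $m(q_0, \kappa, \sigma_0)$ has $\be$ as a prefix, hence as a subsequence. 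The remaining task is to upgrade $\sigma_0$ from merely $\opt(m)$ to $\opt^*(m)$ without destroying the prefix $\be$: iterate the $\substrg$-reduction (which terminates by the same well-foundedness argument underlying Theorem~\ref{thm:opt-not-empty}) to reach some $\sigma_1 \in \opt^*(m)$ with $\sigma_1 \substrg \sigma_0$ or $\sigma_1 = \sigma_0$. I must check the reduction never shortens the part of the execution corresponding to $\be$: every action $a_{i+1}$ in $\be$ is non-redundant in $m(q_0,\kappa,\sigma_0)$ precisely because (1) fails — $\langle q_i, a_{i+1}\rangle \notin U_m$ means there is a strategy under which that action has strictly positive $Q$-value, so it cannot be removed by a redundancy-reduction that preserves optimality. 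Making this last point airtight — that the failure of (1) guarantees each logged action survives the passage to $\opt^*$ — is the main obstacle, and I expect it to require the auxiliary propositions deferred to the appendix (the precise relationship between $U_m$, the operator $U(\cdot)$, and which actions a $\substrg$-minimizing reduction can delete).
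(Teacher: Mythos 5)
Your ``if'' direction is essentially the paper's argument and is fine: condition (2) kills the intersection because $\opt^*(m) \subseteq \opt(m)$, and condition (1) kills it because any $\sigma \in \strg(\be)$ with a useless logged pair can be replaced by $U_m(\sigma)$, which by Proposition~\ref{prp:useless} is at least as good everywhere and, being the same strategy with the useless action turned into $\nothing$, witnesses $U_m(\sigma) \substrg \sigma$, so $\sigma \notin \opt^*(m)$.

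The genuine gap is exactly where you flag it, in the ``only if'' direction, and it is not a routine detail that the appendix propositions supply. Your plan is to start from $\sigma_0 \in \strg(\be) \cap \opt(m)$ and iterate a $\substrg$-descent inside $\opt(m)$ until you land in $\opt^*(m)$, arguing that no step of the descent can disturb the logged actions because each $\langle q_i, a_{i+1}\rangle \notin U_m$. But $\substrg$ is defined only through the induced executions, not through pointwise agreement of strategies: a strategy $\sigma'$ with $\sigma' \substrg \sigma_0$ may take a \emph{different non-nothing} action at a logged state $q_i$ and still produce (before-$\nothing$) executions that are subsequences of those of $\sigma_0$, e.g.\ when the execution of $\sigma_0$ revisits states so that the shorter execution embeds as a non-initial subsequence. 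So ``$\langle q_i,a_{i+1}\rangle$ has positive optimal $Q$-value'' does not by itself guarantee that the limit of your descent still lies in $\strg(\be)$, and no auxiliary proposition in the paper characterizes which actions a $\substrg$-step can delete. The paper avoids the iteration entirely: from $\sigma_0$ it forms the single explicit strategy $\sigma_2 = U_m(\sigma_0)$, which (a) is still in $\strg(\be)$ because $U_m$ only changes useless pairs to $\nothing$ and, by the failure of (1), no logged pair is useless; (b) is in $\opt(m)$ by Proposition~\ref{prp:useless}; and (c) is already in $\opt^*(m)$, shown directly by contradiction: any $\sigma' \in \opt(m)$ with $\sigma' \substrg \sigma_2$ must do nothing at some state $q$ where $\sigma_2$ takes a non-$\nothing$ action, and every such pair $\langle q, \sigma_2(q)\rangle$ is non-useless by construction of $U_m$, so $V_m(\sigma_2,q) = Q_m(\sigma_2,q,\sigma_2(q)) > 0$ while $V_m(\sigma',q) = 0$, contradicting optimality of $\sigma'$. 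Replacing your descent-and-preservation step with this one-shot construction closes the gap.
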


Thus, checking whether $\mathsf{log}^{-1}(\be) \cap \behv^*(m)$ is empty has been reduced to checking the two conditions (1) and (2).  We explain how to check each of these in the next two sections.
 
\subsection{Step 1: Inherent Redundancy}

Rather than construct $U_m$ explicitly, we use the following lemma to check condition (1).
The lemma uses the definition 
$Q^*_m(q,a) = r(q,a) + \gamma \sum_{q'} t(q,a)(q') * V^*_m(q')$ where
$V^*_m(q) = \max_{\sigma} V_m(\sigma, a)$.

\begin{lemma}\label{lem:useless-check}
For all environment models $m$, states $q$, and actions $a$, $\langle q, a\rangle$ is in $U_m$ if and only if $a \neq \nothing$ and $Q^*_m(q,a) \leq 0$.
\end{lemma}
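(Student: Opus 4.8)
The plan is to unfold the definition of $U_m$ and observe that the only gap between the two sides of the biconditional is the quantifier over strategies: membership in $U_m$ asks for $Q_m(\sigma,q,a) \leq 0$ for \emph{all} $\sigma$, while the claimed characterization asks only for $Q^*_m(q,a) \leq 0$, i.e.\ for the single ``best'' choice of continuation. So the heart of the matter is showing that $\sup_\sigma Q_m(\sigma,q,a) = Q^*_m(q,a)$, after which the equivalence $(\forall \sigma.\, Q_m(\sigma,q,a)\le 0) \iff Q^*_m(q,a)\le 0$ is immediate, and the side condition $a\neq\nothing$ appears verbatim in both definitions.

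Concretely, I would proceed as follows. First, recall from the definition that $Q_m(\sigma,q,a) = r(q,a) + \gamma \sum_{q'} t(q,a)(q') \cdot V_m(\sigma,q')$. The term $r(q,a)$ does not depend on $\sigma$, and the coefficients $\gamma\, t(q,a)(q')$ are nonnegative, so $Q_m(\cdot,q,a)$ is maximized exactly when each $V_m(\sigma,q')$ is maximized. Second, invoke the standard MDP fact---available from the setup in Section~\ref{sec:mdps}, where $\opt(m)$ is defined and stationary optimal strategies are asserted to exist---that there is a single stationary strategy $\sigma^*\in\opt(m)$ attaining $V_m(\sigma^*,q') = V^*_m(q')$ \emph{simultaneously} for all states $q'$. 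Plugging $\sigma^*$ into the expression for $Q_m$ gives $Q_m(\sigma^*,q,a) = r(q,a) + \gamma\sum_{q'} t(q,a)(q')\cdot V^*_m(q') = Q^*_m(q,a)$, and by the monotonicity observation no other $\sigma$ can exceed this value; hence $\max_\sigma Q_m(\sigma,q,a) = Q^*_m(q,a)$. Third, conclude: if $a\neq\nothing$ and $Q^*_m(q,a)\le 0$, then for every $\sigma$, $Q_m(\sigma,q,a)\le Q^*_m(q,a)\le 0$, so $\langle q,a\rangle\in U_m$; conversely, if $\langle q,a\rangle\in U_m$ then $a\neq\nothing$ and in particular $Q_m(\sigma^*,q,a)\le 0$, i.e.\ $Q^*_m(q,a)\le 0$.

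The main obstacle, such as it is, is the appeal to the existence of a uniformly optimal stationary strategy---the fact that one $\sigma^*$ can witness $V^*_m(q')$ at \emph{all} $q'$ at once, rather than needing a different strategy per state. This is classical MDP theory (the paper even flags it in Section~\ref{sec:mdps} and defers background to the appendix), so I would simply cite it; the only thing to be slightly careful about is that $\sigma^*$ need not select $a$ at $q$, but that is irrelevant since $Q_m(\sigma,q,a)$ reads off $\sigma$ only at the successor states $q'$, not at $q$ itself. Everything else is bookkeeping. (A marginal alternative, if one wanted to avoid invoking uniform optimality, would be to define, for fixed $q$ and $a$, a pointwise supremum and show $V^*_m = \sup_\sigma V_m(\sigma,\cdot)$ is itself achieved via the Bellman fixed point---but given the machinery already assumed in the paper, the direct citation is cleaner.)
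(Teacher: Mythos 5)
Your proposal is correct and follows essentially the same route as the paper's proof: both unfold the definition of $U_m$, reduce the statement to showing $\max_\sigma Q_m(\sigma,q,a) = Q^*_m(q,a)$, and then push the maximization through $r(q,a) + \gamma\sum_{q'} t(q,a)(q')\,V_m(\sigma,q')$. The only difference is that the paper performs this exchange of $\max$ and sum silently, whereas you explicitly justify it via the existence of a single stationary strategy that is optimal at all states simultaneously (together with nonnegativity of the coefficients), which is exactly the fact the paper's step tacitly relies on.
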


\subsection{Step 2: Checking Optimality}

To check (2), we construct a model $m'$ from $m$ that limits the optimization to selecting a strategy that can cause the observed behavior $\be$.  To do so, we adjust the reward function of $m$ so that the actions taken in $b$ are always taken by the optimal strategies of $m'$.  That is, if $b = [q_0,a_1,q_1,\ldots, a_n,q_n]$, then for each $q_i$ and $a_{i+1}$, we replace the reward for taking an action $a'$ other than $a_{i+1}$ from the state $q$ with a negative reward $-\omega$ that is so low as to assure that the action $a'$ would not be used by any optimal strategy.   
We use $\omega > 2r^*/(1-\gamma)$ where $r^*$ is the reward with the largest magnitude appearing in $m$ since the total discounted reward is bounded from below by $-r^*/(1-\gamma)$ and from above by $r^*/(1-\gamma)$ (recall that $\sum_{i=0}^{\infty} \gamma^i r^* = r^*/(1-\gamma)$).

We formally define $m'$ to be $\fix(m,\be)$ where $\fix(m,[]) = m$ and 
\[ \fix(\langle \mathcal{Q}, \mathcal{A}, t, r, \gamma\rangle, [q_0,a_1,q_1,\ldots, a_n,q_n]) =  \fix(\langle \mathcal{Q}, \mathcal{A}, t, r', \gamma\rangle, [q_1,\ldots,a_n,q_n]) \]
where $r'(q_0,a) = -\omega$ for all $a \neq a_1$ and $r'(q_0, a_1) = r(q_0, a_1)$.
The construction $\fix$ has the following useful property:  $\strg(b) \cap \opt(m)$ is empty if and only if $\opt(\fix(m,b)) \cap \opt(m)$ is empty (Proposition~\ref{prp:fix-works}).
This property is useful since testing whether $\opt(m) \cap \opt(\fix(m,\be))$ is empty may be reduced to simply comparing their optimal values: $\opt(m) \cap \opt(\fix(m,\be))$ is empty if and only if for all states $q$, $\max_{\sigma} V_{\fix(m,\be)}(\sigma,q) \neq \max_{\sigma} V_m(\sigma, q)$ (Proposition~\ref{prp:value-check}).  
Fortunately, algorithms exist for finding the optimal value of MDPs (see, e.g.,~\cite{rn03artificial}).

These two propositions combine to yield the next lemma, which justifies how we conduct testing for the second condition of Lemma~\ref{lem:two-steps} in the second step of $\mathsc{Audit}$. 
\begin{lemma}\label{lem:fix-value-check}
For all environment models $m$ and behaviors $b$, $\strg(b) \cap \opt(m)$ is empty if and only if for all $q$, $\max_{\sigma} V_{\fix(m,\be)}(\sigma,q) \neq \max_{\sigma} V_m(\sigma, q)$.
\end{lemma}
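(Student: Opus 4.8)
The plan is to prove Lemma~\ref{lem:fix-value-check} by chaining together the two propositions cited just above its statement, namely Proposition~\ref{prp:fix-works} (which says $\strg(b) \cap \opt(m)$ is empty iff $\opt(\fix(m,b)) \cap \opt(m)$ is empty) and Proposition~\ref{prp:value-check} (which says $\opt(m) \cap \opt(\fix(m,\be))$ is empty iff for all states $q$, $\max_{\sigma} V_{\fix(m,\be)}(\sigma,q) \neq \max_{\sigma} V_m(\sigma, q)$). Composing these two biconditionals immediately yields the claimed equivalence, so the ``proof'' of the lemma itself is essentially a one-line appeal to transitivity of ``iff''. The real content lies in the two propositions, so I would organize the argument by first recording this composition and then devoting the bulk of the work to establishing the propositions (which, per the excerpt, are deferred to the appendix; I would reproduce their proofs here).

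For Proposition~\ref{prp:fix-works}, the key observation is that $\fix(m,\be)$ differs from $m$ only in the reward function, and only at the state-action pairs $(q_i, a')$ with $a' \neq a_{i+1}$ appearing along $\be$, where the reward is lowered to $-\omega$ with $\omega > 2r^*/(1-\gamma)$. I would argue the forward direction by contraposition: if some $\sigma \in \strg(b) \cap \opt(m)$ exists, then $\sigma$ never uses any of the punished actions (since $\sigma$ agrees with $\be$ at each $q_i$), so $V_{\fix(m,\be)}(\sigma, q) = V_m(\sigma, q)$ for all $q$, and moreover no strategy can beat $V_m(\cdot, q)$ in $\fix(m,\be)$ — because any strategy $\tau$ that ever uses a punished action incurs total discounted reward at most $-\omega + \gamma \cdot r^*/(1-\gamma) < -r^*/(1-\gamma) \le V_m(\tau', q)$ for the optimal $\tau'$, using the stated bounds $\pm r^*/(1-\gamma)$ on discounted rewards; hence $\sigma$ is optimal in $\fix(m,\be)$ as well, so $\sigma \in \opt(\fix(m,\be)) \cap \opt(m)$, which is nonempty. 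Conversely, if $\sigma \in \opt(\fix(m,\be)) \cap \opt(m)$, then optimality in $\fix(m,\be)$ forces $\sigma$ to avoid every punished action (by the same magnitude estimate), which is exactly the condition $\sigma \in \strg(b)$, so the intersection $\strg(b) \cap \opt(m)$ is nonempty. The slightly delicate point is confirming that $\omega > 2r^*/(1-\gamma)$ is genuinely large enough to rule out punished actions \emph{in every state}, including states not on $\be$; this follows because even a single use of a $-\omega$ reward drags the value below the global lower bound $-r^*/(1-\gamma)$ achievable by any punished-action-free strategy, so I would just make that inequality explicit.

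For Proposition~\ref{prp:value-check}, the statement is a general fact about two MDPs sharing the same $\mathcal{Q}$, $\mathcal{A}$, $t$, $\gamma$ but with $r_{m'} \le r_m$ pointwise (which holds here since $\fix$ only lowers rewards): $\opt(m) \cap \opt(m')$ is nonempty iff $V^*_{m'} = V^*_m$ as functions on $\mathcal{Q}$. The $(\Rightarrow)$ direction is immediate — a common optimal $\sigma$ achieves $V_m(\sigma,\cdot) = V^*_m$ and $V_{m'}(\sigma,\cdot) = V^*_{m'}$, and I would need the auxiliary fact (presumably Proposition~\ref{prp:useless}'s neighborhood, or a standard MDP fact from the appendix) that $r_{m'} \le r_m$ implies $V_{m'}(\sigma,\cdot) \le V_m(\sigma,\cdot)$ pointwise for every fixed $\sigma$, giving $V^*_{m'} \le V^*_m$; combined with $V_{m'}(\sigma,\cdot) = V^*_{m'}$ and the reverse inequality this forces equality. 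For $(\Leftarrow)$, suppose $V^*_{m'} = V^*_m$; take any $\sigma' \in \opt^{}(m')$ so $V_{m'}(\sigma',\cdot) = V^*_{m'} = V^*_m \ge V_m(\sigma',\cdot) \ge V_{m'}(\sigma',\cdot)$, where the middle inequality is optimality of $V^*_m$ and the last is the pointwise $V_{m'} \le V_m$ fact; the sandwich forces $V_m(\sigma',\cdot) = V^*_m$, i.e.\ $\sigma' \in \opt(m)$, so $\sigma'$ lies in both. I expect the main obstacle — modest as it is — to be pinning down exactly which elementary MDP facts (existence of a stationary optimal strategy, monotonicity of value in reward, the Bellman characterization $V^*_m(q) = \max_a Q^*_m(q,a)$) are taken as given from the appendix background versus needing a short inductive/fixed-point justification here; I would cite the appendix for these and keep the lemma's own proof to the three-line composition of the two propositions.
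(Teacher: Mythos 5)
Your overall route is the same as the paper's: the lemma itself is just the composition of Propositions~\ref{prp:fix-works} and~\ref{prp:value-check}, and your treatment of Proposition~\ref{prp:fix-works} uses the same ingredients the paper does (the $-\omega$ penalty makes any strategy that uses a punished pair suboptimal in $\fix(m,\be)$, strategies in $\strg(\be)$ never collect a modified reward, and rewards only decrease). The genuine gap is in your forward direction of Proposition~\ref{prp:value-check}. You cast it as ``a general fact about two MDPs sharing $\mathcal{Q},\mathcal{A},t,\gamma$ with $r_{m'}\leq r_m$ pointwise: $\opt(m)\cap\opt(m')\neq\emptyset$ iff $V^*_{m'}=V^*_m$.'' That general fact is false. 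Take a single state $q$ with two self-looping actions $a$ and $a'$, $r(q,a)=2$, $r(q,a')=0$, $r'(q,a)=1$, $r'(q,a')=0$: the strategy choosing $a$ is optimal in both MDPs, yet $V^*_m(q)=2/(1-\gamma)\neq 1/(1-\gamma)=V^*_{m'}(q)$. Monotonicity of value in the reward gives only $V^*_{m'}\leq V^*_m$, and the ``reverse inequality'' you appeal to is never established---it is exactly what fails in general. What saves the direction is the specific structure of $\fix$: any $\sigma\in\opt(\fix(m,\be))$ must avoid every punished pair (your own magnitude estimate, which you use in the converse of Proposition~\ref{prp:fix-works}), so $\opt(\fix(m,\be))\subseteq\strg(\be)$, and a strategy in $\strg(\be)$ sees identical rewards in $m$ and $\fix(m,\be)$, hence $V_{\fix(m,\be)}(\sigma,\cdot)=V_m(\sigma,\cdot)$. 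This is the paper's Proposition~\ref{prp:opt-equal} (via Propositions~\ref{prp:opt-in-strg} and~\ref{prp:fix-same}); with it, a common optimal $\sigma$ yields $\max_{\sigma'}V_{\fix(m,\be)}(\sigma',q)=V_{\fix(m,\be)}(\sigma,q)=V_m(\sigma,q)=\max_{\sigma'}V_m(\sigma',q)$ for every $q$. The ingredient is present elsewhere in your write-up, but it must be invoked here; the generic $r'\leq r$ argument cannot substitute for it. Your backward (sandwich) direction, by contrast, is correct and is in fact a slightly cleaner argument than the paper's, since it needs only monotonicity.

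A smaller imprecision in your forward direction of Proposition~\ref{prp:fix-works}: the bound ``$-\omega+\gamma r^*/(1-\gamma)$'' applies to $V_{\fix(m,\be)}(\tau,q)$ only at states $q$ where $\tau$ takes a punished action immediately; from other start states the penalty may be reached late, with small probability, or not at all. To conclude that no strategy beats $\sigma$ at every state you should instead use the pointwise bound $V_{\fix(m,\be)}(\tau,q)\leq V_m(\tau,q)\leq V_m(\sigma,q)=V_{\fix(m,\be)}(\sigma,q)$ (the paper's Proposition~\ref{prp:fix-less}), i.e., exactly the monotonicity fact you cite only later; the magnitude estimate is what is needed for the converse inclusion $\opt(\fix(m,\be))\subseteq\strg(\be)$, not here. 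This is easy to repair, but as written the step does not follow.
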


These lemmas combine with reasoning about the actual code of the program to yield its correctness.
\begin{theorem}\label{thm:algo-correct}
For all environment models $m$ and behaviors $b$, $\mathsc{audit}(m,\be)$ returns true if and only if $\mathsf{log}^{-1}(\be) \cap \behv^*(m)$ is empty.
\end{theorem}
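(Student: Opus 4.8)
The plan is to establish the correctness of $\mathsc{Audit}$ by chaining together Lemma~\ref{lem:two-steps}, Lemma~\ref{lem:useless-check}, and Lemma~\ref{lem:fix-value-check}, and then arguing that each line of the pseudocode faithfully implements the corresponding mathematical test. By Lemma~\ref{lem:two-steps}, $\mathsf{log}^{-1}(\be) \cap \behv^*(m)$ is empty if and only if either (1) some $\langle q_i, a_{i+1}\rangle$ lies in $U_m$, or (2) $\strg(\be) \cap \opt(m)$ is empty. The algorithm returns true exactly when at least one of these holds, so it suffices to show that the first block of code (lines 01--05) returns true iff condition (1) holds, and that, conditioned on reaching line 06, the second block (lines 06--22) returns true iff condition (2) holds.

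First I would handle the first block. Line 01 computes $V^*_m$ via $\mathsf{solveMDP}$, which I take to return the optimal value function of the MDP (this is the meaning of ``any MDP solving algorithm''). The loop of lines 02--05 iterates over each logged transition; by Lemma~\ref{lem:useless-check}, $\langle q_i, a_{i+1}\rangle \in U_m$ iff $a_{i+1} \neq \nothing$ and $Q^*_m(q_i, a_{i+1}) = r(q_i,a_{i+1}) + \gamma \sum_{q'} t(q_i,a_{i+1})(q') * V^*_m(q') \leq 0$. The guard on line 03 checks $a_{i+1} \neq \nothing$, and line 04 checks precisely $Q^*_m(q_i,a_{i+1}) \leq 0$ (the sum $\sum_{j=0}^{|\mathcal{Q}|} t[q_i][a_{i+1}][j] * V^*_m[j]$ is just $\sum_{q'} t(q_i,a_{i+1})(q') * V^*_m(q')$ written with array indices). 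So the first block returns true iff there is an $i$ with $0 \leq i < n$ and $\langle q_i, a_{i+1}\rangle \in U_m$, which is condition (1). If it does not return, condition (1) fails and control passes to line 06.

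Next I would handle the second block under the assumption that condition (1) fails. Lines 06--11 compute $r^*$, the maximum absolute value of any reward in $m$, and copy $r$ into $r'$; line 12 sets $\omega = 2r^*/(1-\gamma) + 1 > 2r^*/(1-\gamma)$, matching the bound required in the definition of $\fix$. Lines 13--17 overwrite $r'[q_i][k] := -\omega$ for every $k \neq a_{i+1}$ along the log, so $m' = \langle \mathcal{Q}, \mathcal{A}, t, r', \gamma\rangle$ is exactly $\fix(m,\be)$ (I would note that repeated states in $\be$ only reassign the same entries, consistent with the recursive definition of $\fix$). Line 18 computes $V^*_{m'}$, and lines 19--22 return false iff $V^*_m[j] = V^*_{m'}[j]$ for some state $j$, and otherwise return true. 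By Lemma~\ref{lem:fix-value-check}, $\strg(\be) \cap \opt(m)$ is empty iff $\max_\sigma V_{\fix(m,\be)}(\sigma,q) \neq \max_\sigma V_m(\sigma,q)$ for all $q$, i.e.\ iff $V^*_{m'}(q) \neq V^*_m(q)$ for all $q$; so the second block returns true iff condition (2) holds. Combining: $\mathsc{Audit}(m,\be)$ returns true iff (1) or (2) holds iff $\mathsf{log}^{-1}(\be) \cap \behv^*(m)$ is empty, which is the claim.

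The main obstacle is not any deep mathematics --- the three lemmas do the real work --- but rather the bookkeeping needed to verify that the pseudocode matches the definitions exactly: that $\mathsf{solveMDP}$ indeed yields $V^*$, that the index-based array arithmetic on lines 04 and 18 computes the quantities $Q^*_m$ and $V^*_{m'}$ as defined, that line 12's ``$+1$'' gives a strict inequality $\omega > 2r^*/(1-\gamma)$ as $\fix$ requires, and that the nested loops of lines 13--16 modify precisely the entries $r'(q_i, k)$ for $k \neq a_{i+1}$ and leave $r'(q_i, a_{i+1}) = r(q_i, a_{i+1})$, so that $m' = \fix(m,\be)$. Once these routine correspondences are checked, the theorem follows immediately.
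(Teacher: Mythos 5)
Your proposal is correct and follows essentially the same route as the paper's own proof: both reduce the claim to Lemma~\ref{lem:two-steps} and then use Lemma~\ref{lem:useless-check} to show lines 01--05 detect condition (1) and Lemma~\ref{lem:fix-value-check} to show lines 06--22 detect condition (2), after checking that lines 06--17 indeed build $\fix(m,\be)$ with $\omega > 2r^*/(1-\gamma)$. Your treatment is if anything slightly more careful about the code-to-definition correspondence (e.g., repeated states in the log and the exact biconditional for each block), but the decomposition and the lemmas invoked are the same.
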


The running time of the algorithm is dominated by the two MDP optimizations.  These may be done exactly by reducing the optimization to a system of linear equations~\cite{d63probabilistic}.  Such systems may be solved in polynomial time~\cite{k79polynomial,k84new}.  However, in practice, large systems are often difficult to solve.  Fortunately, a large number of algorithms for making iterative approximations exist whose run time depends on the quality of the approximation.  (See~\cite{ldk95complexity} for a discussion.)

\section{Applying our Formalism to Past Methods}
\label{sec:compare}

Past methods of enforcing purpose requirements have not provided methods of assigning purposes to sequences of actions.  
Rather, they presume that the auditor (or someone else) already has a method of determining which behaviors are for a purpose.  
In essence, these methods presuppose that the auditor already has the set of allowed behaviors $\behv^*(r^p)$ for the purpose $p$ that he is enforcing.  These methods differ in their intensional representations of the set $\behv^*(r^p)$.  Thus, some may represent a given set exactly while others may only be able to approximate it.  These differences mainly arise from the different mechanisms they use to ensure that the auditee only exhibits behaviors from $\behv^*(r^p)$.  We use our semantics to study how reasonable these approximations are.

Byun et al.\ use role-based access control~\cite{s96role} to consider purposes~\cite{bbl05purpose,bl08purpose,nblbkkt10privacy}.  They associate purposes with sensitive resources and with roles, and their method only grants the user access to the resource when the purpose of the user's role matches the resource's purpose.  The method does not, however, explain how to determine which purposes to associate with which roles.  Furthermore, a user in a role can perform actions that do not fit the purposes associated with his role allowing him to use the resource for a purpose other than the intended one.  Thus, their method is only capable of enforcing policies when there exists some subset $A$ of the set of actions $\mathcal{A}$ such that $\behv^*(r^p)$ is equal to the set of all interleavings of $A$ with $\mathcal{Q}$ of finite but unbounded length (i.e., $\behv^*(r^p) = (\mathcal{Q} \cross A)^* \cons \mathcal{Q}$ where $\cons$ is append raised to work over sets in the standard pairwise manner).  The subset $A$ corresponds to those actions that use a resource with the same purpose as the auditee's role.
Despite these limitations, their method can implement the run-time enforcement used at some organizations, such as a hospital that allows physicians access to any record to avoid denying access in time-critical emergencies.  However, it does not allow for the fine-grain distinctions used during post-hoc auditing done at some hospitals to ensure that physicians do not abuse their privileges.

Al-Fedaghi uses the work of Byun et al.\ as a starting point but concludes that rather than associating purposes with roles, one should associate purposes with sequences of actions~\cite{a07beyond}.  Influenced by Al-Fedaghi, Jafari et al.\ adopt a similar position calling these sequences \emph{workflows}~\cite{jss09enforcing}.  The set of workflows allowed for a purpose $p$ corresponds to $\behv^*(r^p)$.  They do not provide a formal method of determining which workflows belong in the allowed set.  They do not consider probabilistic transitions and the intuition they supply suggests that they would only include workflows that successfully achieves or improves the purpose.  Thus, our approach appears more lenient by including some behaviors that fail to improve the purpose.

Others have adopted a hybrid approach allowing for the roles of an auditee to change based on the state of the system~\cite{pgy08dynamic,kwb11conditional}.  These changes effectively allow role-based access control to simulate the workflow methods to be just as expressive while introducing a level of indirection inhabited by dynamic roles.

Agrawal et al.\ use a \emph{query intrusion model} to enforce purpose requirements that operates in a manner similar to intrusion detection~\cite{aksx02hippocratic}.  Their method flags a request for access as a possible violation if the request claims to be for a purpose despite being dissimilar to previous requests for the same purpose.  To avoid false positives, the set of allowed behaviors $\behv^*(r^p)$ would have to be small or have a pattern that the query intrusion model could recognize.

Jif is a language extension to Java designed to enforce requirements on the flows of information in a program~\cite{cmvz09jif}.  Hayati and Abadi explain how to reduce purpose requirements to information flow properties that Jif can enforce~\cite{ha05language}.  Their method requires that inputs are labeled with the purposes for which the policy allows the program to use them and that each unit of code be labeled with the purposes for which that code operates.  If information can flow from an input statement labeled with one purpose to code labeled for a different purpose, their method produces a compile-time type error.  (For simplicity, we ignore their use of sub-typing to model sub-purposes.)  In essence, their method enforces the rule \emph{if information $i$ flows to code $c$, then $i$ and $c$ must be labeled with the same purpose}.  The interesting case is when the code $c$ uses the information $i$ to perform some observable action $a_{c,i}$, such as producing output.  Under our semantics, we treat the program as the auditee and view the policy as limiting these actions.
By directly labeling code, their method does not consider the contexts in which these actions occur.  Rather the action $a_{c,i}$ is aways either allowed or not based on the purpose labels of $c$ and $i$.  By not considering context, their method is subject to the same limitations as the method of Byun et al.\ with the subset $A$ being equal to the set of all actions $a_{c,i}$ such that $c$ and $i$ have the same label.
However, using more advanced type systems (e.g., typestate~\cite{sy86typestate}), they might be able extend their method to consider the context in which code is executed and increase the method's expressiveness.

\section{Multiple Purposes}
\label{sec:multi-purposes}

So far, our formalism allows our hypothetical agent to consider only a single purpose.  However, auditees may perform an action for more than one purpose.  In many cases, the auditor may simply ignore any action that is not governed by the privacy policy and not relevant to the plans the auditee is employing that uses governed actions.  

In the physician example above, the physician already implicitly considered many other purposes before even seeing this current patient.  For example, the physician presumably performed many actions not mentioned in the model in between taking the X-ray, sending it, and making a diagnosis, such as going on a coffee break.  As these actions are not governed by the privacy policy and neither improves nor degrades the diagnosis even indirectly, the auditor may safely ignore them.  Thus, our semantics can handle multiple purposes in this limited fashion.

However, in other cases, the interactions between purposes become important.
Below we discuss two complementary ways that an auditee can consider multiple purposes that produce interactions.  In the first, the auditee considers one purpose after another.  In the second, the auditee attempts to optimize for multiple purposes simultaneously.  We find that our semantics may easily be extended to handle the first, but difficulties arise for the second.  We end the section by considering what features a formalism would need to handle simultaneous consideration of purposes and the challenges they raise for auditing.

\subsection{Sequential Consideration}
\label{sec:sequential}

Yahoo!'s privacy policy states that they will not contact children for the purpose of marketing~\cite{yahoo-info-policy}.  Suppose Yahoo!\ decides to change the name of \url{games.yahoo.com} to \url{fun.yahoo.com} because they believe the new name will be easier to market.  They notify users of \url{games.yahoo.com}, including children, of the upcoming change so that they may update their bookmarks.

In this example, the decision to change names, made for marketing, causes Yahoo!\ to contact children.  However, we do not feel this is a violation of Yahoo!'s privacy policy.  A decision made for marketing altered the expected future of Yahoo!\ in such a way that customer service would suffer if Yahoo!\ did nothing.  Thus, to maintain good customer service, Yahoo!\ made the decision to notify users without further consideration of marketing.  Since Yahoo!\ did not consider the purpose of marketing while making this decision, contacting the children was not \emph{for} marketing despite Yahoo!\ considering the implications of changing the name for marketing while making its decision to contact children.  

Bratman describes such planning in his work formalizing \emph{intentions}~\cite{b87intention}.  He views it as a sequence of planning steps in which the intention to act (e.g., to change the name) at one step may affect the plans formed at later steps.  In particular, each step of planning starts with a model of the environment that is refined by the intentions formed by each of the previous planning steps.  The step then creates a plan for a purpose that further refines the model with new intentions resulting from this plan.  Thus, a purpose of a previous step may affect the plan formed in a later step for a different purpose by constraining the choices available at the later step of planning.  We adopt the stance that an action selected at a step is \emph{for} the purpose optimized at that step but not other previous purposes affecting the step.

\subsection{Simultaneous Consideration}
\label{sec:simultaneous}

At other times, an auditee might consider more than one purpose in the same step.  For example, the physician may have to both provide quality treatment and respect the patient's financial concerns.  In this case, the physician may not be able to simultaneously provide the highest quality care at the lowest price.  The two competing concerns must be balanced and the result may not maximize the satisfaction of either of them.  

The traditional way of modeling the simultaneous optimization of multiple rewards is to combine them into a single reward using a weighted average over the rewards.  Each reward would be weighted by how important it is to the auditee performing the optimization.  This amalgamation of the various purpose rewards makes it difficult to determine for which purpose various actions are selected.

One possibility is to analyze the situation using counterfactual reasoning (see, e.g.,~\cite{m74cement}).  For example, given that the auditee performed an action $a$ while optimizing a combination of purposes $p_1$ and $p_2$, the auditor could ask if the auditee would have still performed the action $a$ even if the auditee had not considered the purpose $p_1$ and had only optimized the purpose $p_2$.  If not, than the auditor could determine that the action was for $p_1$.  However, as the next example shows, such reasoning is not sufficient to determine the purposes of the actions.

To show the generality of purposes, we consider an example involving travel reimbursement.  Consider a Philadelphian who needs to go to New York City for a business meeting with his employer and is invited to give a lecture at a conference in Washington, D.C., with his travel expenses reimbursed by the conference.  He could drive to either New York or Washington (modeled as the actions $\mathsf{driveNY}$ and $\mathsf{driveDC}$, respectively).  However, due to time constraints he cannot drive to both of them.  To attend both events, he needs to fly to both (modeled as actions $\mathsf{flyNY}$ and $\mathsf{flyDC}$).  As flying is more expensive, both driving actions receives a higher reward than flying ($2$ instead of $1$), but flying is better than not going ($0$).  Figure~\ref{fig:drive-fly} models the traveler's environment.
\begin{figure}
\begin{center}
\input{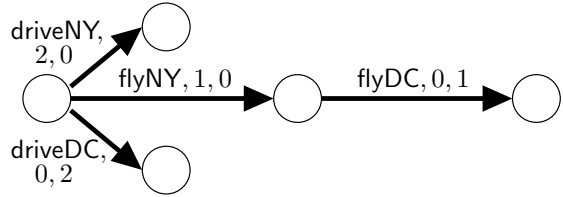}
\end{center}
\caption{Model of a traveler deciding whether to fly or drive.  Since every transition is deterministic, we represent each as a single arrow.  Each is labeled with the action name, the rewards for business and the rewards for lecturing in that order.  Self-loops of zero reward are not shown including all those labeled with the nothing action $\nothing$.}
\label{fig:drive-fly}
\end{figure}

Given these constraints, he decides to fly to both only to find auditors at both events scrutinizing his decision.
For example, an auditor working for the conference could find that his flight to Washington was not for the lecture since the traveler would have driven had it not been for work.  If the conference's policy requires that reimbursed flights are \emph{only for} the lecture, the auditor might deny reimbursement.  However, the employer seems even less likely to reimburse the traveler for his flight to Washington since the flight is redundant for getting to New York.

However, under the semantics discussed above, each flight would be for both purposes since only when the traveler considers both does he decide to take either flight.  While having the conference reimburse the traveler for his flight to Washington seems reasonable, the idea that they should also reimburse him for his flight to New York appears counterintuitive.

Our approach of sequential planning also cannot explain this example.  To plan sequentially, the traveler must consider one of the two events first.  If, for example, he considers New York first, he will decided to drive to New York and then decline the invitation to Washington.  Only by considering both events at once, does he decide to fly.  

We believe resolving this conflict requires extending our semantics to consider requirements that an action be \emph{for} a purpose (as opposed to \emph{not for} or \emph{only for}).  Furthermore, we believe that the optimization of combinations of purposes does not accurately model human planning with multiple purposes.  Intuitively, the traveler selects $\mathsf{flyDC}$ not \emph{for} work but also not \emph{only for} the conference.  Rather $\mathsf{flyDC}$ seems be \emph{for} the conference under the constraint that it must not prevent the traveler from attending the meeting.  In the next section, we consider the possibility of modeling human planning more accurately.

\subsection{Modeling Human Planning}
\label{sec:human-planning}

While MDPs are useful for automated planning, they are not specialized for modeling planning by humans, leading to the search for more tailored models~\cite{s55behavioral,gs02bounded}.  Simon proposed to model humans as having \emph{bounded rationality} to account for their limitations and their lack of information~\cite{s55behavioral}.  Work on formalizing bounded rationality has resulted in a variety of planning principles ranging from the systematic (e.g., Simon's \emph{satisficing}) to the heuristic (e.g.,~\cite{g02adaptive}).
However, ``[a] comprehensive, coherent theory of bounded rationality is not available''~\cite[p14]{s02what} and there still is ``a significant amount of unpredictability in how an animal or a human being will undertake to solve a problem'' such as planning~\cite[p40]{dkp96cognitive}.

We view creating semantics more closely tied to human planning interesting future work.  However, modeling human planning may prove complex enough to justify accepting the imperfections of semantics such as ours or even heuristic based approaches for finding violations such as the query intrusion model discussed above~\cite{aksx02hippocratic}.

Despite these difficulties, one could look for discrepancies between a semantics of purpose requirements and experimental results on planning.  In this manner one could judge how closely a semantics approximates human planning in the ways relative to purpose requirements.

In particular, our semantics appears to hold human auditees to too high of a standard: they are unlikely to always be able to pick the optimal strategy for a purpose.  When enforcing a restrictive rule, this strictness could result in the auditor investigating some auditees who honestly planned for the only allowed purpose, but failed to find the optimal policy.  While such investigations would be false positives, they do have the pleasing side-effect of highlighting areas in which an auditee could improve his planning.  

In the case of enforcing prohibitive rules, this strictness could cause the auditor to miss some violations that do not optimize the prohibited purpose, but, nevertheless, are for the purpose.  The additional checks proposed at the end of Section~\ref{sec:audit-not} could be useful for detecting these violations: if the auditee's actions are not consistent with a strategy that optimizes any of the allowed purposes but does improve to some degree the prohibited purpose, the actions may warrant extra scrutiny.

While our semantics is limited by our understanding of human planning, it still reveals concepts crucial to the meaning of \emph{purpose}.  Ideas such as planning and non-redundancy will guide future investigations on the topic.

\section{Related Work}
\label{sec:rel-work}

We have already covered the most closely related work in Section~\ref{sec:compare}.  Below we discuss work on related problems and work on purpose from other fields.

\paragraph{Minimal Disclosure}
The works most similar to ours in approach have been on \emph{minimal disclosure}, which requires that the amount of information used in granting a request for access should be as little as possible while still achieving the purpose behind the request.  Massacci, Mylopoulos, and Zannone define minimal disclosure for Hippocratic databases~\cite{mmz06hierarchical}.  
Barth, Mitchell, Datta, and Sundaram study minimal disclosure in the context of workflows~\cite{bmds07privacy}.  They model a workflow as meeting a utility goal if it satisfies a temporal logic formula.
 Minimizing the amount of information disclosed is similar to an agent maximizing his reward and thereby not performing actions that have costs but no benefits.  However, in addition to having different research goals, we consider several factors that these works do not, including quantitative purposes that are satisfied to varying degrees and probabilistic behavior resulting in actions being for a purpose despite the purpose not being achieved.

\paragraph{Expressing Privacy Policies with Purpose}
Work on understanding the components of privacy policies has shown that \emph{purpose} is a common component of privacy rules (e.g.,~\cite{ba05analyzing,ba08analyzing}).  Some languages for specifying access-control policies allow the purpose of an action to partially determine if access is granted~\cite{epal,c02web,bkkf05usable,bkk06empirical}.  However, these languages do not give a formal semantics to the purposes.  Instead they rely upon the system using the policy to determine whether an action is for a purpose or not.

\paragraph{Philosophical Foundations}
Taylor provides a detailed explanation of the importance of planning to the meaning of \emph{purpose}, but does not provide any formalism~\cite{t66action}.

The sense in which the word ``purpose'' is used in privacy policies is also related to the ideas of \emph{desire}, \emph{motivation}, and \emph{intention} discussed in works of philosophy (e.g.,~\cite{a57intention}).  
The most closely related to our work is that of Bratman's on intentions from which we get our model of sequential planning~\cite{b87intention}.  In his work, an \emph{intention} is an action an agent plans to take where the plan is formed while attempting to maximize the satisfaction of the agent's \emph{desires}; Bratman's \emph{desires} correspond to our \emph{purposes}.  
Roy formalized Bratman's work using logics and game theory~\cite{r08thinking}.  However, these works are concerned with when an action is rational rather than determining the purposes behind the action. 

We borrow the notion of \emph{non-redundancy} from Mackie's work on formalizing \emph{causality} using counterfactual reasoning~\cite{m74cement}.  In particular, Mackie defines a \emph{cause} to be a non-redundant part of a sufficient explanation of an effect.  
Roughly speaking, we replace the causes with actions and the effect with a purpose.  The extension to our semantics proposed in Section~\ref{sec:simultaneous}, may be seen as another instance of non-redundancy.  This time, we replace the causes with purposes and the effect with an action.  This suggests that for an action to be for a purpose, we expect both that the action was non-redundant for improving that purpose and that the purpose was non-redundant in motivating the action.  That is, we expect planning to be parsimonious.

\paragraph{Planning}
Psychological studies have produced models of human thought (e.g.,~\cite{abbdl04integrated}). 
However, these are too low-level and incomplete for our needs~\cite{dkp96cognitive}.
The GOMS formalism provides a higher level model, but is limited to selecting behavior using simple planning approaches~\cite{jk96goms}. 
Simon's approach of \emph{bounded rationality}~\cite{s55behavioral} and related heuristic-based approaches~\cite{gs02bounded} model more complex planning, but with less precise predictions.

\section{Conclusions and Future Work}
We use planning to present the first formal semantics for determining when a sequence of actions is for a purpose.
In particular, our semantics uses an MDP-like model for planning, which allows us to automate auditing for both restrictive and prohibitive purpose requirements.  Furthermore, our semantics highlights that an action can be for a purpose even if that purpose is never achieved, a point present in philosophical works on the subject (e.g.,~\cite{t66action}), but whose ramifications on policy enforcement had been unexplored.  Lastly, our framework allows us to explain and compare previous methods of policy enforcement in terms of a formal semantics.

However, we recognize the limitations of this model: it imperfectly models human planning and only captures some forms of planning for multiple purposes.  Nevertheless, we believe the essence of our work is correct: an action is for a purpose if the actor selects to perform that action while planning for the purpose.  Future work will instantiate our semantic framework with more complete models of human planning.

Fundamentally, our work shows the difficulties of enforcement due to issues such as the tenable deniability of ulterior motives.  These difficulties justify policies prohibiting conflicts of interest and requiring the separation of duties despite possibly causing inefficiencies.  For example, many hospitals would err on the side of caution and disallow referral from a physician to his own private practice or require a second opinion to do so, thereby restraining the ulterior motive of profit.  Indeed, despite the maxim that \emph{privacy is security with a purpose}, due to these difficulties, purpose possibly plays the role of guidance in crafting more operational internal policies that organizations enforce rather than the role of a direct input to the formal auditing process itself.  In light of this possibility, one may view our work as a way to judge the quality of these operational policies relevant to the intent of the purpose requirements found in the actual privacy policy.

We further believe that our formalism may aid organizations in designing their processes to avoid the possibility of or to increase the detectability of policy violations.  For example, the organization can decrease violations by aligning employee incentives with the allowed purposes.

\paragraph{Acknowledgments}
We appreciate the discussions we have had with Lorrie Faith Cranor and Joseph Y. Halpern on this work.  We thank Dilsun Kaynar and Divya Sharma for many helpful comments on this paper.

\bibliographystyle{alphaurl}
\bibliography{purpose}


\onecolumn
\appendix


\section{Details of MDPs}

One may find a discussion of MDPs in most introductions to artificial intelligence (e.g.,~\cite{rn03artificial}).  For an MDP $m = \langle \mathcal{Q}, \mathcal{A}, t, r, \gamma\rangle$, the discount factor $\gamma$ accounts for the preference of people for receiving rewards sooner than later.  It may be thought of as similar to inflation.  We require that $\gamma < 1$ to ensure that the expected total discounted reward is bounded.  

The value of a state $q$ under a strategy $\sigma$ is
\[ V_m(\sigma, q) = \E \left[ \sum_{i=0}^\infty \gamma^i r(q_i,\sigma(q_i)) \right] \] 
The Bellman equation shows that 
\[ V_m(\sigma,q) = r(q,\sigma(q)) + \gamma \sum_{q' \in \mathcal{Q}} t(q,\sigma(q))(q') * V_m(\sigma,q') \]

A strategy $\sigma^*$ is optimal if and only if for all states $q$, $V_m(\sigma^*,q) = \max_{\sigma} V_m(\sigma,q)$.  At least one optimal policy always exists.  Furthermore, if $\sigma^*$ is optimal, then 
\[ \sigma^*(q) = \argmax_{a \in \mathcal{A}} \left[ r(s,a) + \gamma \sum_{q' \in \mathcal{Q}} t(q,\sigma(q))(q') * V_m(\sigma,q') \right] \]

\section{Proof of Theorem~\ref{thm:opt-not-empty}}

\newcommand{\subexec}{\triangleleft}
\newcommand{\subexeceq}{\trianglelefteq}

The proper sub-execution relation is a strict partial order.  This follows directly from the proper-subsequence relation $\subseq$ being a strict partial order.   We write $\subexec$ for \emph{proper sub-execution} and $\subexeceq$ for \emph{proper sub-execution or equal}.

Now, we show that $\substrg$ is also strict partial ordering.  
\begin{itemize}
\item Irreflexivity: for no $\sigma$ is $\sigma \substrg \sigma$.  For $\sigma \substrg \sigma$ to be true, there would have to exist a $\sigma \in \opt$ such that for at least one contingency $\kappa'$ and $q'$, $m(q',\kappa', \sigma')$ is a proper sub-execution of itself.  However, this is impossible since the sub-execution relation is strict partial order.

\item Asymmetry: for all $\sigma_1$ and $\sigma_2$, if $\sigma_1 \substrg \sigma_2$, then it is not the case that $\sigma_2 \substrg \sigma_1$.  
To show a contradiction, suppose $\sigma_1 \substrg \sigma_2$ and $\sigma_2 \substrg \sigma_1$ are both true.  It would have to be the case 
that for all contingencies $\kappa$ and states $q$, 
$m(q,\kappa, \sigma_1) \subexeceq m(q,\kappa, \sigma_2)$
and $m(q,\kappa, \sigma_2) \subexeceq m(q,\kappa, \sigma_1)$.
Since $\subexec$ is a strict partial order, this implies that for all $q$ and $\kappa$, $m(q,\kappa, \sigma_1) = m(q,\kappa, \sigma_2)$.  Thus, there cannot exist a contingency $\kappa'$ and state $q'$ such that
$m(q',\kappa', \sigma_2) \subexec m(q',\kappa', \sigma_1)$.
Then $\sigma_2 \substrg \sigma_1$ cannot be true, a contradiction.

\item Transitivity: for all $\sigma_1$, $\sigma_2$, and $\sigma_3$, if $\sigma_1 \substrg \sigma_2$ and $\sigma_2 \substrg \sigma_3$, then $\sigma_1 \substrg \sigma_3$.
Suppose $\sigma_1 \substrg \sigma_2$ and $\sigma_2 \substrg \sigma_3$.
Then for all for all contingencies $\kappa$ and states $q$, 
$m(q,\kappa, \sigma_1) \subexeceq m(q,\kappa, \sigma_2)$
and $m(q,\kappa, \sigma_2) \subexeceq m(q,\kappa, \sigma_3)$.  Since $\subexeceq$ has transitivity, this implies that $m(q,\kappa, \sigma_1) \subexeceq m(q,\kappa, \sigma_3)$ for all $\kappa$ and $q$.

Furthermore, it must be the case that there exists a contingency $\kappa'$ and state $q'$ such that $m(q',\kappa', \sigma_1) \subexec m(q',\kappa', \sigma_2)$.  From above, $m(q',\kappa', \sigma_2) \subexeceq m(q',\kappa', \sigma_3)$.  Thus, by the transitivity of $\subexeceq$, $m(q',\kappa', \sigma_1) \subexec m(q',\kappa', \sigma_3)$ as needed.
\end{itemize}

Since $\substrg$ is a strict partial ordering and $\mathcal{Q} \to \mathcal{A}$ is finite, $\mathcal{Q} \to \mathcal{A}$ is well-founded under $\substrg$.   $\mathcal{Q} \to \mathcal{A}$ being finite also means that $\opt(m)$ is finite.  It is also known to be non-empty~\cite{rn03artificial}.

Suppose $\opt^*(m)$ were empty.  This would mean for every $\sigma$ of $\opt(m)$, there exists $\sigma'$ in $\opt(m)$ such that $\sigma' \substrg \sigma$.
Since $\opt(m)$ is finite but non-empty, this could only happen if $\substrg$ contained cycles.  However, this is a contradiction since 
 $\substrg$ is a strict partial order and $\mathcal{Q} \to \mathcal{A}$ is well-founded under it.  Thus, $\opt^*(m)$ is not empty.

\section{Proofs about Useless States}

\begin{proposition}\label{prp:useless}
For all environment models $m$, sets $U$ such that $U \subseteq U_m$, strategies $\sigma$, and states $q$, $V_m(\sigma, q) \leq V_m(U(\sigma), q)$.
\end{proposition}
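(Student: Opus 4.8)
Write $\sigma' = U(\sigma)$ and let $D = \set{q \in \mathcal{Q}}{\langle q, \sigma(q)\rangle \in U}$ be the set of states on which $\sigma'$ deviates from $\sigma$ by doing nothing. The plan is first to dispatch the states in $D$ directly, and then to propagate the inequality to the remaining states by a maximum-principle argument on the difference of value functions. For $q \in D$ we have $\langle q, \sigma(q)\rangle \in U \subseteq U_m$, so by the definition of $U_m$ the quantity $Q_m(\tau, q, \sigma(q))$ is $\leq 0$ for \emph{every} strategy $\tau$; taking $\tau = \sigma$ and using the Bellman equation $V_m(\sigma, q) = Q_m(\sigma, q, \sigma(q))$ gives $V_m(\sigma, q) \leq 0$. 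On the other hand, since $\nothing$ labels a zero-reward self-loop, $\sigma'(q) = \nothing$ for $q \in D$ forces $V_m(\sigma', q) = 0 + \gamma \cdot 1 \cdot V_m(\sigma', q)$, hence $V_m(\sigma', q) = 0$. So on $D$ we already have $V_m(\sigma, q) \leq 0 = V_m(\sigma', q)$, as required.

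Next I would control the difference $\delta(q) = V_m(\sigma, q) - V_m(\sigma', q)$ on the remaining states. For $q \notin D$ we have $\sigma'(q) = \sigma(q)$, so subtracting the Bellman equations for $\sigma$ and $\sigma'$ yields the homogeneous recurrence
\[ \delta(q) = \gamma \sum_{q' \in \mathcal{Q}} t(q, \sigma(q))(q') \cdot \delta(q') \qquad (q \notin D), \]
together with $\delta(q) \leq 0$ for $q \in D$ from the previous step. I would then set $M = \sup_{q} \delta(q)$ — finite because all value functions are bounded, as $\gamma < 1$ — and argue $M \leq 0$. If $M > 0$, then $\delta$ cannot come close to $M$ on $D$ (where $\delta \leq 0$), so for small $\varepsilon > 0$ there is $q^{*} \notin D$ with $\delta(q^{*}) > M - \varepsilon$; feeding $q^{*}$ into the recurrence and bounding each $\delta(q')$ by $M$ (using that $t(q^{*}, \sigma(q^{*}))$ is a probability distribution) gives $M - \varepsilon < \gamma M$, and letting $\varepsilon \to 0$ gives $M \leq \gamma M$, contradicting $M > 0$. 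Hence $\delta \leq 0$ everywhere, i.e.\ $V_m(\sigma, q) \leq V_m(\sigma', q)$ for all $q$. (Equivalently, one can observe that $V_m(\sigma, \cdot)$ is a subsolution of the Bellman operator of $\sigma'$ and conclude by monotone value iteration; the maximum-principle version is more self-contained.)

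The main obstacle is the maximum-principle step. For finite $\mathcal{Q}$ it is immediate: $M$ is attained at some $q^{*}$, and either $q^{*} \in D$, giving $M \leq 0$ outright, or $q^{*} \notin D$, giving $M \leq \gamma M$. For infinite $\mathcal{Q}$ one must argue with the supremum and an $\varepsilon$-approximate maximiser as above, and one must know the value functions are bounded so that $M$ is well defined — which is exactly where the standing assumptions $\gamma < 1$ and bounded rewards (or a finite state space), used throughout the paper, enter. Everything else is routine manipulation of the Bellman equation and of the zero-reward self-loop structure of $\nothing$.
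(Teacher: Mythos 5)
Your proof is correct, but it takes a genuinely different route from the paper's. The paper argues over trajectories: it partitions executions according to the first time a pair in $U$ is taken, observes that up to that point $\sigma$ and $U(\sigma)$ induce the same probabilities and accumulated rewards, and then compares the tails using exactly the fact you isolate on $D$ --- at a deviation state, $V_m(\sigma,q) = Q_m(\sigma,q,\sigma(q)) \leq 0 = V_m(U(\sigma),q)$. You instead work entirely at the level of value functions: the same key inequality on $D$, the homogeneous recurrence $\delta(q) = \gamma \sum_{q'} t(q,\sigma(q))(q')\,\delta(q')$ off $D$, and a maximum-principle (equivalently, the observation that $V_m(\sigma,\cdot)$ is a subsolution of the $\gamma$-contracting Bellman operator of $U(\sigma)$). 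What your version buys: it never sums over the uncountable space of executions --- a step the paper itself flags as ``ill advised'' and only sketches how to repair --- so it is the more self-contained and rigorous argument; its only extra requirements are boundedness of the value functions and $\gamma<1$, both of which hold in the paper's finite-state, finite-action setting. What the paper's version buys is the operational prefix/tail picture along the first useless action, which is reused almost verbatim in the neighbouring results relating $\strg(b)$, $\opt^*$ and $\behv^*$. Both arguments ultimately rest on the same two facts: uselessness of $\langle q,\sigma(q)\rangle$ forces $V_m(\sigma,q)\leq 0$, and the zero-reward self-loop of $\nothing$ forces $V_m(U(\sigma),q)=0$.
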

\begin{proof}

\newcommand*{\exec}{\mathsf{exec}}
Let $\exec(b)$ be all the executions with the behavior $b$ as a prefix.
Let $B_U$ be the set of all behaviors $b$ such that for some $j$, $b = [q_0, a_1, q_1 \ldots, q_{j}, a_{j+1}, q_{j+1}]$ such that $\langle q_{j}, a_{j+1}\rangle$ is in $U$ but for not $i < j$ is $\langle q_{i-1}, a_i\rangle$ in $U$.
We may use $B_U$ and $\exec(b)$ to partition the space of executions $E$.
Thus, 
\begin{align}
V_m(\sigma, q) 
&= \E \left[ \sum_{i=0}^\infty \gamma^i r(q_i,\sigma(q_i)) \right]\notag\\
&= \sum_{e \in E} \Pr[e | \sigma] \left[ \sum_{i=0}^\infty \gamma^i r(q_i,\sigma(q_i)) \right]\notag\\
&= \sum_{b \in B_U} \sum_{e \in \exec(b)} \Pr[e | \sigma] \left[ \sum_{i=0}^\infty \gamma^i r(q_i,\sigma(q_i)) \right] \label{ln:useless.inner-form}
\end{align}
(Note: as $E$ is uncountable, taking a summation over it is ill advised.  We could take an integral instead.  Alternatively, one could take the sum over executions of bounded length.  This will introduce an error term.  However, as the bound increases the magnitude of this term will drop exponentially fast due to the factor $\gamma$.  In essence, this is how most practical algorithms for solving MDPs operate.  See~\cite{rn03artificial}.)

For any $b$ in $B_U$, consider $e \in \exec(b)$.
Since $e$ is in $\exec(b)$, it must have the following form
\[ [q_0, a_1, q_1 \ldots, q_{j}, a_{j+1}, q_{j+1}, \ldots] \]
where $\langle q_{j}, a_{j+1}\rangle \in U$ but for $i < j$ is $\langle q_{i}, a_{i+1}\rangle \notin U$ where $b = [q_0, a_1, q_1 \ldots, q_{j}, a_{j+1}, q_{j+1}]$.

For $\sigma \in \strg(b)$, we reason as shown as follows.
\begin{align}
\sum_{e \in \exec(b)} \Pr[e | \sigma] & \left[ \sum_{i=0}^\infty \gamma^i r(q_i,\sigma(q_i)) \right] \notag\\
&= \sum_{e \in \exec(b)} \Pr[e | \sigma] \left[ \sum_{i=0}^{j-1} \gamma^i r(q_i,\sigma(q_i)) + \sum_{i=j}^\infty \gamma^i r(q_i,\sigma(q_i)) \right] \notag\\
&= \left[ \sum_{e \in \exec(b)}  \Pr[e | \sigma] \sum_{i=0}^{j-1}  \gamma^i r(q_i,\sigma(q_i)) \right] + \gamma^j  \left[ \sum_{e \in \exec(b)}  \Pr[e | \sigma]  \sum_{i=j}^\infty \gamma^{i-j} r(q_i,\sigma(q_i)) \right] \notag\\
&= \left[ \sum_{e \in \exec(b)}  \Pr[e | \sigma] \sum_{i=0}^{j-1} \gamma^i r(q_i,\sigma(q_i)) \right] + \gamma^j \E \left[ \sum_{i=j}^\infty \gamma^i r(q_i,\sigma(q_i)) \right] \notag\\
&= \left[ \sum_{e \in \exec(b)}  \Pr[e | \sigma] \sum_{i=0}^{j-1} \gamma^i r(q_i,\sigma(q_i)) \right] + \gamma^j V_m(\sigma,q_{j}) \label{ln:useless.make-form}
\end{align}
Furthermore,
\[
\sum_{e \in \exec(b)}  \Pr[e | \sigma] \sum_{i=0}^{j-1} \gamma^i r(q_i,\sigma(q_i))
= \Pr[b | \sigma] \sum_{i=0}^{j-1} \gamma^i r(q_i,\sigma(q_i))  \]
Thus, the left term is equal under $\sigma$ and $U(\sigma)$:
\begin{align}
\sum_{e \in \exec(b)}  \Pr[e | \sigma] \sum_{i=0}^{j-1} \gamma^i r(q_i,\sigma(q_i))
&=\Pr[b | \sigma] \sum_{i=0}^{j-1} \gamma^i r(q_i,\sigma(q_i)) \\
&=\Pr[b | \sigma] \sum_{i=0}^{j-1} \gamma^i r(q_i,U(\sigma)(q_i)) \label{ln:useless.same-in-front-i}\\
&= \sum_{e \in \exec(b)} \Pr[e | U(\sigma)] \sum_{i=0}^{j-1} \gamma^i r(q_i,U(\sigma)(q_i)) \label{ln:useless.same-in-front-ii}
\end{align}
where line~\ref{ln:useless.same-in-front-i} follows since $\sigma(q_i) = U(\sigma)(q_i)$ for $\langle q_{i}, a_{i+1}\rangle \notin U$.

Since $\langle q_{j}, a_{j+1}\rangle \in U$, we know that $Q_m(\sigma,q_{j},a_{j+1}) \leq 0$.  Furthermore, since $\sigma \in \strg(b)$, it is the case that $\sigma(q_{j}) = a_{j+1}$.  Thus, $V_m(\sigma, q_{j}) = Q_m(\sigma,q_{j},\sigma(q_{j})) \leq 0$.  Furthermore, since $\langle q_{j}, a_{j+1}\rangle \in U$, $V_m(U(\sigma), q_{j}) = Q_m(\sigma,q_{j},\nothing) = 0$.
Thus, we may conclude
\begin{align}
\sum_{e \in \exec(b)} \Pr[e | \sigma] & \left[ \sum_{i=0}^\infty \gamma^i r(q_i,\sigma(q_i)) \right] \notag\\
&= \left[ \sum_{e \in \exec(b)}  \Pr[e | \sigma] \sum_{i=0}^{j-1} \gamma^i r(q_i,\sigma(q_i)) \right] + \gamma^j V_m(\sigma,q_{j})\label{ln:useless.use-form-i}\\
&= \left[ \sum_{e \in \exec(b)}  \Pr[e | U(\sigma)] \sum_{i=0}^{j-1} \gamma^i r(q_i,U(\sigma)(q_i)) \right] + \gamma^j V_m(\sigma,q_{j}) \label{ln:useless.left}\\
&\leq \left[ \sum_{e \in \exec(b)}  \Pr[e | U(\sigma)] \sum_{i=0}^{j-1} \gamma^i r(q_i,U(\sigma(q_i))) \right] + \gamma^j V_m(U(\sigma),q_{j})\label{ln:useless.right}\\\\
&= \sum_{e \in \exec(b)} \Pr[e | U(\sigma)] \left[ \sum_{i=0}^\infty \gamma^i r(q_i,U(\sigma)(q_i)) \right]\label{ln:useless.use-form-ii}
\end{align}
where lines~\ref{ln:useless.use-form-i} and~\ref{ln:useless.use-form-ii} come from the reasoning 
leading to line~\ref{ln:useless.make-form}, and
line~\ref{ln:useless.left} comes from the reasoning leading to line~\ref{ln:useless.same-in-front-ii}.

Note that the above also trivially holds when $\sigma \notin \strg(b)$ since $\Pr[e | \sigma] = 0$ and  $\Pr[e | U(\sigma)] = 0$ for all $e \in \exec(b)$.  Thus, for all $\sigma$, we have
\begin{align}
\sum_{e \in \exec(b)} \Pr[e | \sigma] \left[ \sum_{i=0}^\infty \gamma^i r(q_i,\sigma(q_i)) \right]
&\leq \sum_{e \in \exec(b)} \Pr[e | U(\sigma)] \left[ \sum_{i=0}^\infty \gamma^i r(q_i,U(\sigma)(q_i)) \right] \label{eqn:useless.inner-terms-eq}
\end{align}

Thus,
\begin{align}
V_m(\sigma, q)
&= \sum_{b \in B_U} \sum_{e \in \exec(b)} \Pr[e | \sigma] \left[ \sum_{i=0}^\infty \gamma^i r(q_i,\sigma(q_i)) \right] \label{ln:useless.use-inner-form-i}\\
&\leq \sum_{b \in B_U} \sum_{e \in \exec(b)} \Pr[e | \sigma] \left[ \sum_{i=0}^\infty \gamma^i r(q_i,\sigma(q_i)) \right] \label{ln:useless.use-inner-terms-eq}\\
&= V_m(U(\sigma), q) \label{ln:useless.use-inner-form-ii}
\end{align}
where
line~\ref{ln:useless.use-inner-form-i} and~\ref{ln:useless.use-inner-form-ii} comes from the reasoning of line~\ref{ln:useless.inner-form}, and
line~\ref{ln:useless.use-inner-terms-eq} comes from equation~\ref{eqn:useless.inner-terms-eq}.
\end{proof}

\section{Proof of Lemma~\ref{lem:two-steps}}

First we prove that this $\log^{-1}(b) \cap \behv^*(m)$ in the lemma may be replaced with $\strg_m(b) \cap \opt^*(m)$.
Then, we prove the modified statement with two propositions.  We have one corresponding to the \emph{if} direction and one to the \emph{only if} direction.

\begin{proposition}
For environment models $m$, if for all observable behaviors $b$, $\log(b) = b$, then $\strg(b) \cap \opt^*(m)$ is empty if and only if $\log^{-1}(b) \cap \behv^*(m)$ is empty.
\end{proposition}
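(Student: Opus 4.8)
The plan is to unwind the two definitions and show each set is empty exactly when the other is. Recall $\behv^*(m)$ consists of those finite prefixes $b'$ for which there exist $\sigma \in \opt^*(m)$, a contingency $\kappa$, and a state $q$ with $b'$ a subsequence of $m(q,\kappa,\sigma)$; and under the assumption $\log(b)=b$ we have $\log^{-1}(b)=\{b\}$, so $\log^{-1}(b)\cap\behv^*(m)$ is nonempty iff $b\in\behv^*(m)$. Similarly $\strg(b)\cap\opt^*(m)$ is nonempty iff some strategy consistent with $b$ is non-redundantly optimal. So it suffices to prove: $b\in\behv^*(m)$ if and only if there exists $\sigma\in\strg(b)\cap\opt^*(m)$.

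For the ``if'' direction, suppose $\sigma\in\strg(b)\cap\opt^*(m)$. Write $b=[q_0,a_1,q_1,\ldots,a_n,q_n]$. Since $a_{i+1}=\sigma(q_i)$ for each $i<n$ and the transitions from $q_i$ under $a_{i+1}$ to $q_{i+1}$ all have positive probability (this is implicit in $b$ being an observable behavior of $m$), I can pick a contingency $\kappa$ that realizes exactly these transitions along the prefix, i.e.\ $\kappa(q_i,a_{i+1},1)=q_{i+1}$ with the ``first time'' bookkeeping handled so the run follows $b$; then $m(q_0,\kappa,\sigma)$ starts with $q_0,a_1,q_1,\ldots,q_n$, so $b$ is a (prefix, hence) subsequence of it. Since $\sigma\in\opt^*(m)$, this witnesses $b\in\behv^*(m)$.

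For the ``only if'' direction, suppose $b\in\behv^*(m)$, so there are $\sigma\in\opt^*(m)$, $\kappa$, $q$ with $b$ a subsequence of $m(q,\kappa,\sigma)$. The subtlety is that $\sigma$ need not lie in $\strg(b)$: $b$ being a \emph{subsequence} (not a prefix) of the realized execution means the states/actions of $b$ appear in order but possibly with gaps, and moreover the run starts at $q$ which may differ from $q_0$. However, since the states appearing in an execution are exactly the ones visited, and actions are determined by the stationary strategy at the visited state, each $a_{i+1}$ in $b$ equals $\sigma(q_i)$ — because $q_i$ occurs in the execution and whenever the run is at $q_i$ the action taken is $\sigma(q_i)$, and $b$'s action-at-$q_i$ must match what actually happened. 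Hence $\sigma\in\strg(b)$ already, and we are done with $\sigma$ itself.

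\textbf{Expected main obstacle.} The delicate point is the ``subsequence vs.\ prefix'' gap in the definition of $\behv^*$ together with the start-state mismatch: I must argue that any $\sigma$ witnessing $b\in\behv^*(m)$ is automatically in $\strg(b)$, which hinges on stationarity (the action at a state is the same every time the state is visited) and on the fact that $b$'s recorded actions are the actual actions taken. If that argument needs more care — e.g.\ if a state in $b$ could be visited in $m(q,\kappa,\sigma)$ only \emph{after} $\nothing$ has been played, or if $\kappa$'s ``$i$th time'' indexing lets different visits to $q_i$ behave differently (it cannot, since $\kappa$ resolves probabilistic choice but $\sigma$ still dictates the action) — I would fall back on observing that $b$ being an \emph{observable behavior} of the environment already forces consecutive $(q_i,a_{i+1},q_{i+1})$ to be genuine transitions, so the gaps in the subsequence embedding must be trivial and the embedding is in fact a prefix embedding; then $\sigma\in\strg(b)$ is immediate. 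Everything else is routine unwinding of definitions.
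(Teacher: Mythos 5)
Your proof is correct and follows essentially the same route as the paper's: reduce via $\log^{-1}(b)=\{b\}$ to showing that $b\in\behv^*(m)$ iff some strategy of $\opt^*(m)$ lies in $\strg(b)$, using the observability of $b$ (its transitions have positive probability) to build a consistent contingency in one direction and the stationarity of strategies in the other. You actually spell out both directions, including the subsequence-versus-prefix subtlety, more explicitly than the paper, which compresses the whole equivalence into a single asserted sentence justified only by the remark that $b$ was observed and hence produced by a consistent contingency.
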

\begin{proof}
Since $\log^{-1}(b) = \{ b \}$, $\log^{-1}(b) \cap \behv^*(m)$ is empty if and only $b \notin \behv^*(m)$. 
$b$ is in $\behv^*(m)$ if and only if there exists a strategy $\sigma$ in $\opt^*(m)$ such that there exists a contingency $\kappa$, and a state $q$ such that $\be$ is a subsequence of $m(q, \kappa, \sigma)$.

For all $\sigma$ in $\opt^*(m)$, $\exists \kappa, q . \be \subseq m(q,\kappa,\sigma)$ is equivalent to $\forall i \in [0,n). \sigma(q_{i}) = a_{i+1}$ where $\be = [q_0, a_1, q_1, a_2, \ldots, a_n, q_n]$.
To see this, note $b$ was observed and, thus, it must have been produced by a contingency consistent with $m$.

$\forall i \in [0,n). \sigma(q_{i}) = a_{i+1}$ is equivalent to $\sigma \in \strg(b)$.
Thus, $b$ is in $\behv^*(m)$ if and only if there exists a strategy $\sigma$ in $\opt^*(m)$ such that $\sigma$ is in $\strg(b)$.
Thus, $\log^{-1}(b) \cap \behv^*(m)$ is not empty if and only if $\strg(b) \cap \opt^*(m)$ is not empty.
\end{proof}

\begin{proposition}
For all environment models $m$ and behaviors $\be = [q_0, a_1, q_1, \ldots, a_n, q_n]$,
$\strg(b) \cap \opt^*(m)$ is not empty if (1) for all $i$ such that $0 \leq i < n$, $\langle q_i, a_{i+1}\rangle \notin U_m$ and (2) $\strg(b) \cap \opt(m)$ is not empty.
\end{proposition}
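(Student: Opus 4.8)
<br>

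The plan is to exhibit a single strategy lying in $\strg(b)\cap\opt^*(m)$. By hypothesis~(2), pick some $\sigma_0\in\strg(b)\cap\opt(m)$, and define $\sigma^*$ by setting $\sigma^*(q)=\nothing$ whenever $V^*_m(q)=0$ and $\sigma^*(q)=\sigma_0(q)$ otherwise. I would then argue that $\sigma^*$ belongs to $\strg(b)\cap\opt^*(m)$.

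First I would extract from hypothesis~(1) the fact that, for every $i<n$, either $a_{i+1}=\nothing$ or $V^*_m(q_i)>0$: if $a_{i+1}\neq\nothing$, then $\langle q_i,a_{i+1}\rangle\notin U_m$, so by Lemma~\ref{lem:useless-check} we get $Q^*_m(q_i,a_{i+1})>0$, hence $V^*_m(q_i)\geq Q^*_m(q_i,a_{i+1})>0$. This is exactly what makes $\sigma^*\in\strg(b)$: at a state $q_i$ with $V^*_m(q_i)>0$ we have $\sigma^*(q_i)=\sigma_0(q_i)=a_{i+1}$, and at a state $q_i$ with $V^*_m(q_i)=0$ the displayed fact forces $a_{i+1}=\nothing$, so again $\sigma^*(q_i)=\nothing=a_{i+1}$ (the action $b$ prescribes at a repeated state is well defined because $\strg(b)\neq\emptyset$ by~(2)). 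Moreover $\sigma^*\in\opt(m)$ by the pointwise optimality criterion recalled in the appendix: at states with $V^*_m(q)=0$ the action $\nothing$ has value $0=V^*_m(q)$ and is therefore optimal, and at states with $V^*_m(q)>0$ the action $\sigma_0(q)$ is optimal since $\sigma_0$ is, so $\sigma^*(q)$ maximizes $Q^*_m(q,\cdot)$ at every $q$.

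The substantial part is $\sigma^*\in\opt^*(m)$, i.e.\ that no $\sigma'\in\opt(m)$ satisfies $\sigma'\substrg\sigma^*$. Observe first that, by construction together with optimality, $\sigma^*$ takes the action $\nothing$ at a state \emph{precisely when} $V^*_m$ vanishes there. I would then establish the structural fact that, for optimal $\sigma'$ and $\sigma^*$, the relation $\sigma'\substrg\sigma^*$ can only arise by early stopping: along every trajectory, $\sigma'$ mimics $\sigma^*$ step for step until, at some state, it switches to $\nothing$ while $\sigma^*$ would have continued. The point is that the proper-sub-execution condition must hold for \emph{all} contingencies, and substituting one non-$\nothing$ action for another would force into $\sigma'$'s execution a state or action that a suitably chosen contingency keeps off $\sigma^*$'s execution, violating the subsequence requirement; and since an optimal strategy never takes a detour that merely delays reward, no other form of shortening is available. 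Granting this, let $\kappa',q'$ witness that $m(q',\kappa',\sigma')$ is a proper sub-execution of $m(q',\kappa',\sigma^*)$; along that trajectory there is a state $s$, reached by the common prefix, with $\sigma'(s)=\nothing$ and $\sigma^*(s)\neq\nothing$. Then $V_m(\sigma',s)=0$, and since $\sigma'$ is optimal $V^*_m(s)=0$, so $\sigma^*(s)=\nothing$ by construction --- contradicting $\sigma^*(s)\neq\nothing$. Hence $\sigma^*\in\opt^*(m)$, and $\strg(b)\cap\opt^*(m)\neq\emptyset$.

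I expect the structural fact in the third paragraph to be the main obstacle: making precise why $\substrg$ between two strategies amounts to ``do $\nothing$ sooner'' once the ``for all contingencies'' quantifier is taken in earnest, and in particular ruling out genuine action substitutions. A variant that relocates this difficulty is to take $\sigma^*$ to be a $\substrg$-minimal element of the finite nonempty set $\strg(b)\cap\opt(m)$ (using that $\substrg$ is a strict partial order on the finite set $\mathcal{Q}\to\mathcal{A}$, as shown in the proof of Theorem~\ref{thm:opt-not-empty}) and then show that any $\sigma'\in\opt(m)$ with $\sigma'\substrg\sigma^*$ could be edited on the states appearing in $b$ to yield a member of $\strg(b)\cap\opt(m)$ still strictly below $\sigma^*$, contradicting minimality; but that editing step again leans on the same understanding of $\substrg$ (to see the edit does not lengthen the executions back up to those of $\sigma^*$), so I would present the constructive route above as the primary argument.
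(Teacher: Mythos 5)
Your construction is the paper's construction in disguise: for an optimal $\sigma_0$, your $\sigma^*$ coincides with the paper's $U_m(\sigma_0)$, and your three steps (membership in $\strg(\be)$, membership in $\opt(m)$ via the Bellman/greedy criterion, and the contradiction showing no optimal $\sigma'$ satisfies $\sigma' \substrg \sigma^*$ because stopping at a state where $\sigma^*$ continues would force $V^*_m$ to vanish there) mirror the paper's argument, which likewise concludes from $\sigma' \substrg \sigma_2$ that the witness executions share a prefix after which $\sigma'$ plays $\nothing$ while $\sigma_2$ takes a non-useless action of positive value. The ``structural fact'' you flag as the main obstacle is asserted at essentially the same level of detail in the paper's own proof (it simply states the form of the two executions), so your proposal is correct by the paper's standard and takes essentially the same route.
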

\begin{proof}
Suppose the conditions (1) and (2) are true.
Since  $\strg(b) \cap \opt(m)$ is not empty, there exists some $\sigma_1$ in both of them.  Since $\sigma_1$ is in $\strg(b)$, for all $0 \leq i < n$, $\sigma_1(q_i) = a_{i+1}$.  Thus, by condition (2), $\langle q_i, \sigma_1(q_i)\rangle \notin U_m$.  This further implies that $a_{i+1}$ is not $\nothing$.

Let $\sigma_2 = U_m(\sigma_1)$.  $\sigma_2$ is in $\strg(b)$ because for all $0 \leq i < n$, $\sigma_1(q_i) = \sigma_2(q_i)$ since $\langle q_i, \sigma_1(q_i) \rangle \notin U_m$.
Furthermore, by Proposition~\ref{prp:useless}, for all $q$, $V_m(\sigma_1, q) \leq V_m(\sigma_2, q)$.  Thus, $\sigma_2$ is in $\opt(m)$ as well.

To show that $\sigma_2$ is also in $\opt^*(m)$, suppose it were not.  Since $\sigma_2$ is in $\opt(m)$, this implies that there exists $\sigma'$ in $\opt(m)$ such that $\sigma' \substrg \sigma_2$.  For this to be true, there must exist $\kappa'$ and state $q'$ such that $\beforenothing(m(q',\kappa',\sigma')) \subseq \beforenothing(m(q',\kappa',\sigma_2)$.  Thus, for some $i$, 
$m(q',\kappa',\sigma_2)$ must have the form $[q_0, a_1, q_1, \ldots, q_{i-1}, a_{i}, q_{i}, a_{i+1}, q_{i+1}, \ldots]$, and 
$m(q',\kappa',\sigma')$ must have the form  $[q_0, a_1, q_1, \ldots, q_{i-1}, a_{i}, q_{i}, \nothing, q_{i}, \ldots]$ 
where $a_{i+1}$ is not $\nothing$.
Since $\sigma_2(q_{i}) = a_{i+1}$, by the construction of $\sigma_2$, $\langle q_{i}, a_{i+1}\rangle$ is not in $U_m$. 
Thus, there exists some $\sigma_3$ such that $Q_m(\sigma_3, q_i, a_{i+1}) > 0$.

Since $\sigma_2$ is in $\opt(m)$, $Q_m(\sigma_2, q_i, a_{i+1}) \geq Q_m(\sigma_3, q_i, a_{i+1}) > 0$.
Thus, we have $V_m(\sigma_2,q_i) = Q_m(\sigma_2, q_i, a_{i+1}) > 0$.
However, $V_m(\sigma',q_i) = 0$ meaning that $\sigma'$ is not in $\opt(m)$, a contradiction.
\end{proof}

\begin{proposition}
For all environment models $m$ and behaviors $\be = [q_0, a_1, q_1, \ldots, a_n, q_n]$,
if $\strg(b) \cap \opt^*(m)$ is not empty, then
(1) for all $i$ such that $0 \leq i < n$, $\langle q_i, a_{i+1}\rangle \notin U_m$ and (2) $\strg(b) \cap \opt(m)$ is not empty.
\end{proposition}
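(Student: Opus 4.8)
The plan is to treat this as the \emph{only if} direction of (the $\opt^*$-reformulation of) Lemma~\ref{lem:two-steps}: assume a strategy $\sigma \in \strg(b) \cap \opt^*(m)$ and derive (1) and (2). Claim (2) is immediate, since $\opt^*(m) \subseteq \opt(m)$ by definition, so that very same $\sigma$ already witnesses $\strg(b) \cap \opt(m) \neq \emptyset$. All the real work is in claim (1), which I would prove by contradiction: suppose $\langle q_i, a_{i+1}\rangle \in U_m$ for some $i$ with $0 \leq i < n$.

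From such an $i$ I would build a competitor in $\opt(m)$ that sits strictly below $\sigma$ in the $\substrg$ order, contradicting $\sigma \in \opt^*(m)$. Take $U = \{\langle q_i, a_{i+1}\rangle\}$. Since $\sigma \in \strg(b)$ we have $\sigma(q_i) = a_{i+1}$, so $\sigma'' := U(\sigma)$ is just $\sigma$ with its value at the single state $q_i$ overwritten by $\nothing$, and note $a_{i+1} \neq \nothing$ because $\langle q_i, a_{i+1}\rangle \in U_m$. As $U \subseteq U_m$, Proposition~\ref{prp:useless} gives $V_m(\sigma, q) \leq V_m(\sigma'', q)$ for every state $q$; combined with optimality of $\sigma$ this forces $\sigma'' \in \opt(m)$. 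It then remains only to establish $\sigma'' \substrg \sigma$.

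To prove $\sigma'' \substrg \sigma$ I would fix an arbitrary contingency $\kappa$ and state $q$ and compare $m(q,\kappa,\sigma)$ with $m(q,\kappa,\sigma'')$. Since $\sigma$ and $\sigma''$ agree everywhere except at $q_i$, the two executions coincide up to the first index $k$ at which the state equals $q_i$ (if there is one). If $q_i$ is never visited, the two executions are equal. If it is visited at index $k$, I would first note that $\sigma$ cannot take $\nothing$ at any index $< k$, since a $\nothing$ self-loop would freeze the run at a state $\neq q_i$ and $q_i$ would never appear; hence up through index $k$ neither execution has yet used $\nothing$, after which $\sigma''$ uses $\nothing$ forever while $\sigma$ takes $a_{i+1} \neq \nothing$ and continues. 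Consequently $\beforenothing(m(q,\kappa,\sigma''))$ is the prefix ending at that first occurrence of $q_i$, a proper prefix --- hence proper subsequence --- of $\beforenothing(m(q,\kappa,\sigma))$, so $m(q,\kappa,\sigma'')$ is a proper sub-execution of $m(q,\kappa,\sigma)$. In all cases, then, $m(q,\kappa,\sigma'')$ is a proper sub-execution of or equal to $m(q,\kappa,\sigma)$. For the required \emph{strict} instance I would use that $b$ was actually observed, so there is a contingency $\kappa'$ consistent with $m$ for which $b$ is a prefix of $m(q_0,\kappa',\sigma)$ (using $\sigma(q_j) = a_{j+1}$ for $j < n$); since $q_i$ occurs in $b$, the index $k$ exists for $(\kappa', q_0)$ and $m(q_0,\kappa',\sigma'')$ does contain the $\nothing$ action, so it is a proper sub-execution of $m(q_0,\kappa',\sigma)$. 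This yields $\sigma'' \substrg \sigma$ with $\sigma'' \in \opt(m)$, the contradiction we wanted, so no such $i$ exists and (1) holds.

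I expect the main obstacle to be the bookkeeping in the third paragraph: making $\beforenothing$ behave as claimed hinges on the small but essential observation that $\sigma$ does not stop (via $\nothing$) before the first visit to $q_i$, and producing the strict instance requires explicitly exhibiting a contingency that replays the observed behavior $b$. The rest --- claim (2), the definition of $\sigma''$, and the appeal to Proposition~\ref{prp:useless} --- is routine.
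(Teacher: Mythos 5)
Your proof is correct and takes essentially the same route as the paper's: condition (2) from $\opt^*(m) \subseteq \opt(m)$, and condition (1) by contradiction, using Proposition~\ref{prp:useless} to show the $\nothing$-modified strategy is still in $\opt(m)$ and then showing it sits $\substrg$-below $\sigma$, contradicting $\sigma \in \opt^*(m)$. The only (welcome) difference is that you modify $\sigma$ with the singleton $U = \{\langle q_i, a_{i+1}\rangle\}$ rather than with the full $U_m$ as the paper does, which makes the contingency-by-contingency comparison of executions cleaner, and your verification of $\substrg$ (including the strict witness) is more explicit than the paper's terse version.
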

\begin{proof}
Condition (2) follows from the fact that $\opt^*(m) \subseteq \opt(m)$.

To prove condition (1), suppose $\strg(b) \cap \opt^*(m)$ is not empty but condition (1) does not hold.  Then there exists $\sigma_1$ in  $\strg(b) \cap \opt^*(m)$.  Furthermore, there exists some $i'$ such that $\langle q_{i'}, a_{i'+1}\rangle \in U_m$.  Since $\sigma_1 \in \strg(b)$, it must be the case that for all $i < n$, $a_{i+1} = \sigma(q_{i})$.  Thus, $\sigma_1(q_{i'}) = a_{i'+1}$.
By Proposition~\ref{prp:useless}, for all $q$, $V_m(\sigma_1, q) \leq V_m(U_m(\sigma_1), q)$. 
Furthermore, $U(\sigma_1) \substrg \sigma_1$.  To see this, recall that $U_m$ is not empty. Thus, any contingency $\kappa'$ that results in state $q_{i'}$, $m(q_0,\kappa',U_m(\sigma_1)) \subseq m(q_0,\kappa',\sigma_1)$ since only $U_m(\sigma_1)$ does nothing at $q_{i'}$.
For $\kappa$ that do not lead to $q_{i'}$, the two executions will be the same.

Since $U_m(\sigma_1) \substrg \sigma_1$ and $U(\sigma_1)$ is in $\opt(m)$, $\sigma_1$ cannot be in $\opt^*(b)$, a contradiction.
\end{proof}

\section{Proof of Lemma~\ref{lem:useless-check}}

If $\langle q, a\rangle$ is in $U_m$, then $a \neq \nothing$ and for all strategies $\sigma$, $Q_m(\sigma,q,a) \leq 0$.  Thus, the lemma is true if the following is true: $Q^*(q,a) \leq 0$ iff $\forall \sigma . Q_m(\sigma,q,a) \leq 0$.

To show this, note that
$\forall \sigma . Q_m(\sigma,q,a) \leq 0$ iff $\max_{\sigma} Q_m(\sigma,q,a) \leq 0$.
Furthermore,
\begin{align*}
\max_{\sigma} Q_m(\sigma,q,a)
&= \max_{\sigma} r(q,a) + \gamma \sum_{q'} t(q,a)(q') * V_m(\sigma, q')\\
&= r(q,a) + \gamma \sum_{q'} t(q,a)(q') *  \max_{\sigma} V_m(\sigma, q')\\
&= r(q,a) + \gamma \sum_{q'} t(q,a)(q') * V^*(q')\\
&= Q^*_m(q,a)
\end{align*}
Thus, $\forall \sigma . Q_m(\sigma,q,a) \leq 0$ iff  $Q^*_m(q,a) \leq 0$.

\section{Properties of $\mathsf{fix}$}

\begin{proposition}\label{prp:fix-less}
For all environment models $m$, strategies $\sigma$, and states $q$, $V_{\fix(m,b)}(\sigma,q) \leq V_m(\sigma,q)$.
\end{proposition}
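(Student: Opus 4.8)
The plan is to exploit two observations: (i) $\fix$ only ever \emph{lowers} rewards, never raises them, and (ii) it leaves $\mathcal{Q}$, $\mathcal{A}$, $t$, and $\gamma$ untouched, so under a \emph{fixed} strategy $\sigma$ and start state $q$ the probability law over the trajectory $q_0 = q, q_1, q_2, \ldots$ is exactly the same in $m$ and in $\fix(m,b)$. Granting these, the value $V_{\fix(m,b)}(\sigma,q)$ is just the expectation, under that common trajectory law, of a discounted sum whose every summand is no larger than the corresponding summand for $V_m(\sigma,q)$, so monotonicity of expectation finishes the proof. No appeal to optimality is needed, which is why the statement holds for \emph{every} $\sigma$, not just optimal ones.

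The first step is to prove, by induction on the length of $b$, that the reward function $r'$ of $\fix(m,b)$ satisfies $r'(q,a) \le r(q,a)$ for all states $q$ and actions $a$, where $r$ is the reward function of $m$. The base case $b = \emptyseq$ is immediate since $\fix(m,\emptyseq) = m$. For the inductive step, one unfolding of $\fix$ on $[q_0,a_1,q_1,\ldots,a_n,q_n]$ leaves $r(q_0,a_1)$ unchanged, leaves every reward at a state other than $q_0$ unchanged, and replaces $r(q_0,a)$ by $-\omega$ for each $a \neq a_1$. Since $\omega > 2r^*/(1-\gamma) \ge 2r^* \ge r^*$ (recall $r^*$ is the largest reward magnitude in $m$, hence $r^* \ge 0$) and $|r(q_0,a)| \le r^*$, we get $-\omega < -r^* \le r(q_0,a)$; so this unfolding only decreases rewards. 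The recursive call is then taken on a model whose reward function is already pointwise $\le r$, so applying the induction hypothesis to the shorter tail $[q_1,\ldots,a_n,q_n]$ and using transitivity of the pointwise order gives $r' \le r$ pointwise.

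The second step is the expectation argument. Fix $\sigma$ and $q$, and write $V_m(\sigma,q) = \E\!\left[\sum_{i=0}^\infty \gamma^i r(q_i,\sigma(q_i))\right]$, where the expectation is over trajectories $q_0 = q, q_1, q_2, \ldots$ with $q_{i+1}$ drawn from $t(q_i,\sigma(q_i))$. Because $\fix$ changes only the reward component, this trajectory law coincides for $m$ and $\fix(m,b)$. Hence, using $\gamma^i > 0$, the pointwise bound $r'(q_i,\sigma(q_i)) \le r(q_i,\sigma(q_i))$ from Step~1, and monotonicity of expectation,
\[
V_{\fix(m,b)}(\sigma,q) = \E\!\left[\sum_{i=0}^\infty \gamma^i r'(q_i,\sigma(q_i))\right] \le \E\!\left[\sum_{i=0}^\infty \gamma^i r(q_i,\sigma(q_i))\right] = V_m(\sigma,q).
\]
(If one prefers to avoid reasoning about the trajectory measure directly, the same conclusion follows by the Bellman-equation characterization: $r' \le r$ pointwise implies, by a straightforward induction on the iterates of value iteration, that each iterate for $\fix(m,b)$ is dominated by the corresponding iterate for $m$, and the limits inherit the inequality.) The only point requiring care is the bookkeeping in Step~1 — in particular, confirming $-\omega \le r(q,a)$ from the choice of $\omega$ and handling the recursion correctly — and observing explicitly that $\fix$ preserves the transition structure; neither is a genuine obstacle.
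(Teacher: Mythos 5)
Your proposal is correct and takes essentially the same route as the paper's proof: the paper also argues that $\fix$ leaves $\mathcal{Q}$, $\mathcal{A}$, $t$, $\gamma$ unchanged and only lowers rewards pointwise ($r' \leq r$), then bounds the expected discounted sum term by term. The only difference is that you spell out the pointwise inequality by induction on the recursion of $\fix$ (and offer a value-iteration alternative), whereas the paper simply asserts the fact $r'(q,a) \leq r(q,a)$.
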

\begin{proof}
Let $m = \langle \mathcal{Q}, \mathcal{A}, t, r, \gamma\rangle$ and $\fix(m,b) = \langle \mathcal{Q}, \mathcal{A}, t, r', \gamma\rangle$.
\begin{align}
V_m(\sigma, q) 
&= \E \left[ \sum_{i=0}^\infty \gamma^i r(q_i,\sigma(q_i)) \right]\\
&\leq \E \left[ \sum_{i=0}^\infty \gamma^i r'(q_i,\sigma(q_i)) \right] \label{ln:fix-less.key}\\
&= V_{\fix(m,b)}(\sigma, q) 
\end{align}
where line~\ref{ln:fix-less.key} follows from the fact that for all $q$ and $a$, $r'(q,a) \leq r(q,a)$.
\end{proof}

\begin{proposition}\label{prp:fix-same}
For all environment models $m$, behaviors $b$, $\sigma \in \strg(b)$, and states $q$, $V_{\fix(m,b)}(\sigma,q) = V_m(\sigma,q)$
\end{proposition}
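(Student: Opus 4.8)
The plan is to reduce the claim to a single bookkeeping fact about the reward functions: for a strategy $\sigma \in \strg(b)$, the model $\fix(m,b)$ assigns the same reward as $m$ to every state--action pair of the form $\langle q, \sigma(q)\rangle$. Concretely, writing $m = \langle \mathcal{Q}, \mathcal{A}, t, r, \gamma\rangle$ and $\fix(m,b) = \langle \mathcal{Q}, \mathcal{A}, t, r', \gamma\rangle$ (recall $\fix$ changes only the reward function), I would first prove by induction on the length of $b$ that if $\sigma \in \strg(b)$ then $r'(q, \sigma(q)) = r(q, \sigma(q))$ for every state $q \in \mathcal{Q}$.

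For the base case $b = \emptyseq$ we have $\fix(m,\emptyseq) = m$, so $r' = r$. For the inductive step with $b = [q_0, a_1, q_1, \ldots, a_n, q_n]$, unfold one step of the recursive definition: $\fix(m,b) = \fix(m_1, b')$, where $m_1$ has reward $r_1$ with $r_1(q_0, a) = -\omega$ for $a \neq a_1$, $r_1(q_0, a_1) = r(q_0, a_1)$, and $r_1(q,a) = r(q,a)$ for $q \neq q_0$, and where $b' = [q_1, \ldots, a_n, q_n]$. Since $\sigma \in \strg(b)$ we have $\sigma(q_0) = a_1$, hence $r_1(q_0, \sigma(q_0)) = r_1(q_0,a_1) = r(q_0, a_1) = r(q_0,\sigma(q_0))$, while $r_1(q, \sigma(q)) = r(q, \sigma(q))$ trivially for $q \neq q_0$; thus $r_1$ and $r$ agree along $\sigma$ at every state. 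Moreover $\strg(b) \subseteq \strg(b')$ (the constraints defining $\strg(b')$ are a subset of those defining $\strg(b)$), so $\sigma \in \strg(b')$, and the induction hypothesis applied to $m_1$ and $b'$ gives $r'(q, \sigma(q)) = r_1(q, \sigma(q))$ for all $q$. Chaining the two equalities yields $r'(q, \sigma(q)) = r(q, \sigma(q))$ for all $q$, as needed. The one point requiring care is that $b$ may visit a state twice; but since $\sigma$ is stationary and $\sigma \in \strg(b)$, a repeated state is always paired with the same action in $b$, so no inconsistency arises across the repeated applications of the recursion.

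Given this fact, I would finish in either of two equivalent ways. The cleanest is to observe that $V_m(\sigma, \cdot)$ and $V_{\fix(m,b)}(\sigma, \cdot)$ are both solutions of the Bellman fixed-point equation $V(q) = \bar r(q, \sigma(q)) + \gamma \sum_{q' \in \mathcal{Q}} t(q, \sigma(q))(q') * V(q')$ for their respective reward functions $\bar r$; by the fact just proved these right-hand sides are the same function of $V$, and since $0 < \gamma < 1$ this affine operator is a contraction with a unique fixed point, so the two value functions coincide. Alternatively, one can expand $V_m(\sigma,q) = \E\left[\sum_{i=0}^{\infty} \gamma^i r(q_i, \sigma(q_i))\right]$ and the analogous expression for $\fix(m,b)$: the two models share the same transition function $t$, hence the same distribution over trajectories $q_0, q_1, q_2, \ldots$ under $\sigma$, and by the fact above $r(q_i, \sigma(q_i)) = r'(q_i, \sigma(q_i))$ for every $i$, so the two expectations agree term by term. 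The main (and essentially only) obstacle is the induction of the second paragraph; everything else is immediate.
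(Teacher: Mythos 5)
Your proof is correct and takes essentially the same route as the paper's: both arguments establish that $\fix(m,b)$ leaves the reward unchanged at every pair $\langle q, \sigma(q)\rangle$ when $\sigma \in \strg(b)$, and then conclude $V_{\fix(m,b)}(\sigma,q) = V_m(\sigma,q)$ from the fact that the transition function (hence the trajectory distribution) is identical. Your explicit induction on the recursive definition of $\fix$, together with the remark about repeated states, merely spells out carefully the reward-agreement step that the paper asserts in a couple of sentences, and your Bellman-contraction alternative is an equally valid way to finish.
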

\begin{proof}
Let $m = \langle \mathcal{Q}, \mathcal{A}, t, r, \gamma\rangle$ and $\fix(m,b) = \langle \mathcal{Q}, \mathcal{A}, t, r', \gamma\rangle$.
Let $b = [q_0, a_1, q_1, \ldots, a_n, q_n]$.  

Since $\sigma$ is in $\strg(b)$, for all $i$ such that $0 \leq i < n$, $\sigma(q_i) = a_{i+1}$.  Thus, $r'(q_i,a_{i+1}) = r(q_i,a_{i+1})$.
For all $q$ that is not equal to $q_i$ for any $i$, $r'(q,a) = r(q,a)$ for all $a$.
Thus, for all $a$ and $q$, $r'(q,\sigma(q)) = r(q,\sigma(q))$.
This implies
\begin{align*}
V_m(\sigma, q) 
&= \E \left[ \sum_{i=0}^\infty \gamma^i r(q_i,\sigma(q_i)) \right]\\
&= \E \left[ \sum_{i=0}^\infty \gamma^i r'(q_i,\sigma(q_i)) \right]\\
&= V_{\fix(m,b)}(\sigma, q) 
\end{align*}
\end{proof}

\begin{proposition}\label{prp:fix-as-good}
For all environment models $m$, behaviors $b$, and $\sigma_1 \notin \strg(b)$, there exists a $\sigma_2 \in \strg(b)$ such that for all states $q$, $V_{\fix(m,b)}(\sigma_1,q) \leq V_{\fix(m,b)}(\sigma_2,q)$.
\end{proposition}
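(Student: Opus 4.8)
The plan is to avoid modifying $\sigma_1$ at all and instead exhibit one $\sigma_2$ that serves uniformly: an \emph{optimal} strategy of the MDP $\fix(m,b)$. Write $\fix(m,b) = \langle \mathcal{Q},\mathcal{A},t,r',\gamma\rangle$ and $b = [q_0,a_1,q_1,\ldots,a_n,q_n]$. Since $\fix(m,b)$ is a genuine MDP with $0<\gamma<1$, it has an optimal strategy $\sigma^*$, i.e.\ $V_{\fix(m,b)}(\sigma^*,q)=\max_\sigma V_{\fix(m,b)}(\sigma,q)$ for every $q$ (standard MDP theory; see the appendix on MDPs). If I can show $\sigma^*\in\strg(b)$, then setting $\sigma_2:=\sigma^*$ finishes the proof at once, because for all $q$, $V_{\fix(m,b)}(\sigma_1,q)\le \max_\sigma V_{\fix(m,b)}(\sigma,q)=V_{\fix(m,b)}(\sigma^*,q)=V_{\fix(m,b)}(\sigma_2,q)$. (Note that the hypothesis $\sigma_1\notin\strg(b)$ is not actually needed for this conclusion.) So the real content is the claim: every optimal strategy of $\fix(m,b)$ agrees with $b$, that is, $\sigma^*(q_i)=a_{i+1}$ for $0\le i<n$.

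Two uniform value bounds in $\fix(m,b)$ will drive that claim. First, $\fix$ only lowers rewards and every original reward has magnitude at most $r^*$, so $r'(q,a)\le r^*$ for all $q,a$, hence $V_{\fix(m,b)}(\sigma,q)\le \sum_{i\ge 0}\gamma^i r^* = r^*/(1-\gamma)$ for every $\sigma$ and $q$. Second, let $\hat\sigma$ play $a_{i+1}$ at each $q_i$ ($0\le i<n$) and agree with $\sigma^*$ everywhere else; this is well defined because $b$, being a behavior producible by a strategy, is consistent (so $q_i=q_j$ forces $a_{i+1}=a_{j+1}$), which is also why the penalized reward $-\omega$ sits at exactly the pairs $(q_i,a)$ with $a\neq a_{i+1}$. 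Then $\hat\sigma\in\strg(b)$ and $\hat\sigma$ never selects a penalized pair, so every reward it collects is $\ge -r^*$ and $V_{\fix(m,b)}(\hat\sigma,q)\ge -r^*/(1-\gamma)$ for all $q$. (Alternatively, invoke Proposition~\ref{prp:fix-same} and the analogous bound for $m$.)

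Now I argue the key claim by contradiction. Suppose $\sigma^*(q_i)=a'\neq a_{i+1}$ for some $i<n$, so $r'(q_i,a')=-\omega$. The Bellman equation for $V_{\fix(m,b)}(\sigma^*,\cdot)$ at $q_i$ gives $V_{\fix(m,b)}(\sigma^*,q_i) = -\omega + \gamma\sum_{q'} t(q_i,a')(q')\,V_{\fix(m,b)}(\sigma^*,q') \le -\omega + \gamma\,r^*/(1-\gamma)$ by the first bound. But optimality of $\sigma^*$ and the second bound give $V_{\fix(m,b)}(\sigma^*,q_i)\ge V_{\fix(m,b)}(\hat\sigma,q_i)\ge -r^*/(1-\gamma)$. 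Combining, $\omega \le (1+\gamma)\,r^*/(1-\gamma)$, contradicting $\omega > 2r^*/(1-\gamma) \ge (1+\gamma)\,r^*/(1-\gamma)$ (the last step because $\gamma<1$). Hence no such $i$ exists, $\sigma^*\in\strg(b)$, and we are done.

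I expect the main obstacle to be purely quantitative: verifying that the penalty $\omega$ imposed by $\fix$ really does exceed the worst-case spread $(1+\gamma)r^*/(1-\gamma)$ between the best attainable value and the value guaranteed by $\hat\sigma$; getting a loose version of this bound is easy, getting it below $2r^*/(1-\gamma)$ requires keeping the $\gamma$ factor on the one-step-lookahead term. A secondary bookkeeping point is the well-definedness of $\hat\sigma$, which rests on the implicit assumption that $b$ is producible by a strategy; I would state this explicitly. A tempting alternative is to build $\sigma_2$ from $\sigma_1$ directly by correcting behavior-state actions and using single-state policy improvement; correcting one penalized state at a time does \emph{not} work (it can lower the value at some states when $\gamma$ is not small), but correcting \emph{all} behavior states of $\sigma_1$ simultaneously does, via a path-wise coupling that splits each execution at its first deviation from $b$ and bounds the tail by $\pm r^*/(1-\gamma)$ — this yields the same inequality $\omega > (1+\gamma)r^*/(1-\gamma)$ and would be the route to take if one prefers not to invoke existence of optimal strategies.
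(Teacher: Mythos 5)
Your proof is correct, but it takes a genuinely different route from the paper's. The paper builds $\sigma_2$ by locally patching $\sigma_1$: it sets $\sigma_2(q_i) = a_{i+1}$ at exactly the logged states where $\sigma_1$ deviates and leaves $\sigma_1$ unchanged elsewhere, then argues that since $r'(q,\sigma_1(q)) \leq r'(q,\sigma_2(q))$ for every $q$ (the penalty $-\omega$ is below any genuine reward), the values satisfy $V_{\fix(m,b)}(\sigma_1,q) \leq V_{\fix(m,b)}(\sigma_2,q)$; making that last step airtight really requires the first-deviation coupling you sketch at the end of your proposal, since pointwise reward domination alone does not transfer to values when the two strategies induce different transition dynamics. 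You instead take $\sigma_2$ to be an optimal strategy $\sigma^*$ of $\fix(m,b)$ and prove that every optimal strategy of $\fix(m,b)$ lies in $\strg(b)$ via the quantitative bound $\omega > 2r^*/(1-\gamma) \geq (1+\gamma)r^*/(1-\gamma)$ --- which is essentially the content of the paper's Propositions~\ref{prp:fix-sometimes-bad} and~\ref{prp:opt-in-strg}, merged and proved up front (with no circularity, since those are established independently of Proposition~\ref{prp:fix-as-good}). Your route buys a single uniform dominating $\sigma_2$, makes the hypothesis $\sigma_1 \notin \strg(b)$ unnecessary, and delivers $\opt(\fix(m,b)) \subseteq \strg(b)$ as a byproduct, at the cost of invoking existence of optimal stationary strategies and the full strength of the $\omega$ bound; the paper's construction is more elementary and keeps $\sigma_2$ tied to $\sigma_1$, but its final inference is stated more tersely than it deserves. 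Your explicit flag that $b$ must be consistent with some stationary strategy (so that $\hat\sigma$, and likewise the paper's $\sigma_2$, is well defined) is a real assumption, but it is one the paper makes implicitly as well, since observed logs are produced by the auditee's strategy.
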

\begin{proof}
Let $\fix(m,b) = \langle \mathcal{Q}, \mathcal{A}, t, r', \gamma\rangle$.
Let $b = [q_0, a_1, q_1, \ldots, a_n, q_n]$.  
Since $\sigma_1$ is not in $\strg(b)$, there must exist some $i$ such that $\sigma_1(q_i) \neq a_{i+1}$.  
Let the set $I$ hold all such indexes $i$: $I = \set{i \in [0,n)}{\sigma_1(q_i) \neq a_{i+1}}$.  
Let $\sigma_2$ be the strategy such that $\sigma_2(q) = a_{i+1}$ if $q = q_i$ for some $i \in I$ and $\sigma_2(q) = \sigma_1(q)$ otherwise.
By construction, $\sigma_2$ is in $\strg(b)$.

By the construction of $\fix(m,b)$, for all $i \in I$, $r'(q_i,\sigma_1(q_i)) = -\omega \leq r'(q_i, a_{i+1}) = r'(q_i, \sigma_2(q_i))$.  Thus, for all $q$, $r'(q,\sigma_1(q)) \leq r'(q,\sigma_2(q))$.  Thus, for all states $q$, $V_{\fix(m,b)}(\sigma_1,q) \leq V_{\fix(m,b)}(\sigma_2,q)$.
\end{proof}

\begin{proposition}\label{prp:fix-sometimes-bad}
For all environment models $m$, behaviors $b$, $\sigma_1 \notin \strg(b)$, and $\sigma_2 \in \strg(b)$, there exists a state $q$ such that $V_{\fix(m,b)}(\sigma_1,q) < V_{\fix(m,b)}(\sigma_2,q)$.
\end{proposition}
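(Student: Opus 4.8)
The plan is to exhibit an explicit witness state, namely one of the states in $b = [q_0, a_1, q_1, \ldots, a_n, q_n]$ at which $\sigma_1$ deviates from the action dictated by $b$. Since $\sigma_1 \notin \strg(b)$, the set $I = \set{i \in [0,n)}{\sigma_1(q_i) \neq a_{i+1}}$ used in Proposition~\ref{prp:fix-as-good} is non-empty; first I would fix any $i \in I$ and take $q = q_i$. I would also note up front that because $\sigma_2 \in \strg(b)$, the behavior $b$ cannot demand two different actions at a single repeated state, so the recursive $\fix$ construction is unambiguous and genuinely sets the reward of the deviating action at $q_i$ to $-\omega$.

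Next I would unfold the Bellman equation for $\sigma_1$ at $q_i$ in $\fix(m,b) = \langle \mathcal{Q}, \mathcal{A}, t, r', \gamma\rangle$. By the construction of $\fix$ and $\sigma_1(q_i) \neq a_{i+1}$ we have $r'(q_i,\sigma_1(q_i)) = -\omega$, so
\[ V_{\fix(m,b)}(\sigma_1, q_i) = -\omega + \gamma \sum_{q'} t(q_i,\sigma_1(q_i))(q') \, V_{\fix(m,b)}(\sigma_1, q'). \]
Then I would bound the tail: Proposition~\ref{prp:fix-less} gives $V_{\fix(m,b)}(\sigma_1,q') \leq V_m(\sigma_1,q')$, and since every reward of $m$ has magnitude at most $r^*$, $V_m(\sigma_1,q') \leq r^*/(1-\gamma)$. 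Hence, using $0 < \gamma < 1$ and $r^* \geq 0$,
\[ V_{\fix(m,b)}(\sigma_1, q_i) \;\leq\; -\omega + \gamma\,\frac{r^*}{1-\gamma} \;\leq\; -\omega + \frac{r^*}{1-\gamma} \;<\; -\frac{r^*}{1-\gamma}, \]
where the last step uses $\omega > 2r^*/(1-\gamma)$.

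For the other direction, since $\sigma_2 \in \strg(b)$, Proposition~\ref{prp:fix-same} gives $V_{\fix(m,b)}(\sigma_2,q_i) = V_m(\sigma_2,q_i)$, and since all rewards of $m$ are at least $-r^*$ we get $V_m(\sigma_2,q_i) \geq -r^*/(1-\gamma)$. Chaining the two estimates yields $V_{\fix(m,b)}(\sigma_1,q_i) < -r^*/(1-\gamma) \leq V_{\fix(m,b)}(\sigma_2,q_i)$, which is the claimed strict inequality with $q = q_i$.

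I do not expect a real obstacle. The three points needing care are: (i) deriving $I \neq \emptyset$ directly from $\sigma_1 \notin \strg(b)$; (ii) justifying that $\fix$ really assigns reward $-\omega$ to $\sigma_1(q_i)$ at $q_i$ even when states repeat in $b$, which holds precisely because the hypothesis $\sigma_2 \in \strg(b)$ forces $b$ to be internally consistent; and (iii) arranging the bound $\omega > 2r^*/(1-\gamma)$ against the $\pm r^*/(1-\gamma)$ envelope on values so that the final inequality comes out strict rather than merely non-strict — the reason this proposition needs $\sigma_2 \in \strg(b)$ at all, as opposed to the $\leq$ of Proposition~\ref{prp:fix-as-good}.
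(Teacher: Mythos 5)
Your proposal is correct and follows essentially the same route as the paper's proof: the same witness state $q_i$ where $\sigma_1$ deviates, the same $-\omega$ Bellman unfolding bounded against the $\pm r^*/(1-\gamma)$ value envelope, and the same appeal to Proposition~\ref{prp:fix-same} for $\sigma_2$. Your extra remarks (invoking Proposition~\ref{prp:fix-less} for the continuation values and noting that $\sigma_2 \in \strg(b)$ rules out inconsistent repeated states in $b$) only tighten details the paper leaves implicit.
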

\begin{proof}
Let $b = [q_0, a_1, q_1, \ldots, a_n, q_n]$.  
Since $\sigma_1$ is not in $\strg(b)$, there must exist some $i$ such that $\sigma_1(q_i) \neq a_{i+1}$.  By the construction of $\fix(m,b)$, $r'(q_i,\sigma_1(q_i)) = -\omega$.  
Recall that $\omega > 2r^*/(1-\gamma)$ where $r^*$ is the reward with the largest magnitude.  Thus,
\begin{align}
V_{\fix(m,b)}(\sigma_1,q_i)
&= r(q_i,\sigma_1(q_i)) + \gamma \sum_{q'} t(q_i,\sigma_1(q_i))(q')*V_m(\sigma,q')\\
&= -\omega + \gamma \sum_{q'} t(q_i,\sigma_1(q_i))(q')*V_m(\sigma,q')\\
&\leq -\omega + \gamma \sum_{q'} t(q_i,\sigma_1(q_i))(q') * r^*/(1-\gamma)\\
&= -\omega + \gamma * r^*/(1-\gamma) \label{ln:fix-sometimes-bad.prob}\\
&\leq -\omega + r^*/(1-\gamma)\\
&< -[2r^*/(1-\gamma)] + r^*/(1-\gamma)\\
&= -r^*/(1-\gamma)\\
&\leq V_{m}(\sigma_2,q)\label{ln:fix-sometimes-bad.normal}\\
&= V_{\fix(m,b)}(\sigma_2,q)\label{ln:fix-sometimes-bad.same}
\end{align}
where line~\ref{ln:fix-sometimes-bad.prob} follows from $t(q,\sigma_1(q)$ being a probability distribution over states,
line~\ref{ln:fix-sometimes-bad.normal} follows from the definition of $r^*$ and known bounds (e.g.,~\cite{rn03artificial}),
and line~\ref{ln:fix-sometimes-bad.same} follows from by Proposition~\ref{prp:fix-same}.
\end{proof}

\begin{proposition}\label{prp:opt-in-strg}
For all environment models $m$ and behaviors $b$,
$\opt(\fix(m,b))$ is a subset of $\strg(b)$.
\end{proposition}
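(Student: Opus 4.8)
The plan is a short argument by contradiction that simply chains together the two preceding propositions about $\fix$. Fix an environment model $m$, a behavior $b = [q_0, a_1, q_1, \ldots, a_n, q_n]$, and let $\sigma_1 \in \opt(\fix(m,b))$ be arbitrary; I want to show $\sigma_1 \in \strg(b)$. Assume toward a contradiction that $\sigma_1 \notin \strg(b)$.

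First I would apply Proposition~\ref{prp:fix-as-good} to $\sigma_1$ to obtain a ``conforming'' strategy $\sigma_2 \in \strg(b)$ that weakly dominates $\sigma_1$ in $\fix(m,b)$, i.e.\ $V_{\fix(m,b)}(\sigma_1,q) \leq V_{\fix(m,b)}(\sigma_2,q)$ for all states $q$. Next I would apply Proposition~\ref{prp:fix-sometimes-bad} with this very pair $\sigma_1 \notin \strg(b)$ and $\sigma_2 \in \strg(b)$ to get a particular state $q$ at which the inequality is strict: $V_{\fix(m,b)}(\sigma_1,q) < V_{\fix(m,b)}(\sigma_2,q)$. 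Combining these two facts at that state $q$ gives $V_{\fix(m,b)}(\sigma_1,q) < V_{\fix(m,b)}(\sigma_2,q) \leq \max_{\sigma} V_{\fix(m,b)}(\sigma,q)$, so $\sigma_1$ is not optimal for $\fix(m,b)$ — contradicting $\sigma_1 \in \opt(\fix(m,b))$. Hence $\sigma_1 \in \strg(b)$, and since $\sigma_1$ was arbitrary, $\opt(\fix(m,b)) \subseteq \strg(b)$.

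There is no real obstacle here: the quantitative work has already been absorbed into Propositions~\ref{prp:fix-as-good} and~\ref{prp:fix-sometimes-bad}, the decisive ingredient being the choice $\omega > 2r^*/(1-\gamma)$, which in Proposition~\ref{prp:fix-sometimes-bad} pushes the value of any off-$b$ deviation strictly below the universal lower bound $-r^*/(1-\gamma)$ on attainable values and hence strictly below the value of any conforming strategy. The only point that needs a moment's care is that optimality of a stationary strategy is a pointwise condition over all states, so exhibiting a single state $q$ where $\sigma_2$ strictly beats $\sigma_1$ is already enough to break optimality of $\sigma_1$.
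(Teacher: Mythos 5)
Your proof is correct and follows essentially the same route as the paper, which derives the result directly from Proposition~\ref{prp:fix-sometimes-bad}: any $\sigma_1 \notin \strg(b)$ is strictly beaten at some state by a strategy in $\strg(b)$, hence not optimal for $\fix(m,b)$. Your extra appeal to Proposition~\ref{prp:fix-as-good} is not needed for the contradiction (the strict inequality alone suffices), though it does conveniently supply the witness $\sigma_2 \in \strg(b)$ whose existence the paper's proof takes for granted.
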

\begin{proof}
Suppose $\sigma_1$ were not in $\strg(b)$.  
By Proposition~\ref{prp:fix-sometimes-bad}, for all $\sigma_2 \in \strg(b)$, there exists a state $q$ such that $V_{\fix(m,b)}(\sigma_1,q) < V_{\fix(m,b)}(\sigma_2,q)$.
Thus, $\sigma_1$ is not in $\opt(\fix(m,b))$.
\end{proof}

\begin{proposition}\label{prp:opt-equal}
For all environment models $m$, behaviors $b$, and strategies $\sigma$ in $\opt(\fix(m,b))$, $V_{\fix(m,b)}(\sigma, q) = V_m(\sigma, q)$.
\end{proposition}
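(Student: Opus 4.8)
The plan is to obtain this proposition as an immediate composition of the two preceding results. First I would invoke Proposition~\ref{prp:opt-in-strg}, which says $\opt(\fix(m,b)) \subseteq \strg(b)$; hence any $\sigma \in \opt(\fix(m,b))$ in fact lies in $\strg(b)$. Then I would apply Proposition~\ref{prp:fix-same}, which states exactly that for every strategy in $\strg(b)$ the value functions of $m$ and $\fix(m,b)$ coincide at every state. Chaining these two facts gives $V_{\fix(m,b)}(\sigma,q) = V_m(\sigma,q)$ for all $q$, as claimed.

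Conceptually, the reason this holds is that $\fix$ only lowers rewards, and only at the state-action pairs that disagree with $b$; an optimal strategy for $\fix(m,b)$ therefore has no incentive to ever take one of the penalized $-\omega$ actions (this is the content of Proposition~\ref{prp:opt-in-strg}, which itself rests on the bound in Proposition~\ref{prp:fix-sometimes-bad} showing that a single deviation from $b$ costs more than the entire reward range $[-r^*/(1-\gamma),\, r^*/(1-\gamma)]$ could ever recoup). Since such a strategy never visits a modified reward, its accumulated reward is unchanged by the modification, which is precisely Proposition~\ref{prp:fix-same}.

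There is no real obstacle here: once Propositions~\ref{prp:opt-in-strg} and~\ref{prp:fix-same} are in hand the argument is a one-line inclusion-then-equality, and the only point to be careful about is that the equality is asserted for every state $q$, which is exactly the quantifier form in which Proposition~\ref{prp:fix-same} is already stated. Note in particular that Proposition~\ref{prp:fix-less} is not needed for this direction; it would only be relevant for the reverse inequality or for statements about arbitrary, non-optimal strategies.
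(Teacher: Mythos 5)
Your proposal is correct and matches the paper's own proof exactly: it cites Proposition~\ref{prp:opt-in-strg} to place $\sigma$ in $\strg(b)$ and then applies Proposition~\ref{prp:fix-same} to conclude the value functions agree at every state. The additional commentary is accurate but not needed; the two-step chain is the whole argument.
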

\begin{proof}
Let $\sigma$ be in $\opt(\fix(m,b))$.
$\sigma$ must be in $\strg(m,b)$ by Proposition~\ref{prp:opt-in-strg}.
Thus, $V_{\fix(m,b)}(\sigma, q) = V_m(\sigma, q)$ by Proposition~\ref{prp:fix-same}.
\end{proof}

\section{Proof of Lemma~\ref{lem:fix-value-check}}

\newcommand*{\strglessopt}[2]{\textsf{strg-opt}({#1},{#2})}

This lemma follows directly from the Propositions~\ref{prp:fix-works} and~\ref{prp:value-check} below.

\begin{proposition}\label{prp:fix-works}
For all environment models $m$ and behaviors $b$,
$\strg(b) \cap \opt(m) = \opt(\fix(m,b)) \cap \opt(m)$.
\end{proposition}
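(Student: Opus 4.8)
The plan is to prove the two set inclusions separately, leveraging the auxiliary propositions about $\fix$ already established. For the inclusion $\opt(\fix(m,b)) \cap \opt(m) \subseteq \strg(b) \cap \opt(m)$, it suffices to show $\opt(\fix(m,b)) \subseteq \strg(b)$, which is exactly Proposition~\ref{prp:opt-in-strg}; intersecting both sides with $\opt(m)$ gives the desired containment immediately.

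For the reverse inclusion $\strg(b) \cap \opt(m) \subseteq \opt(\fix(m,b)) \cap \opt(m)$, take any $\sigma$ in $\strg(b) \cap \opt(m)$. Since $\sigma$ is already in $\opt(m)$, it remains only to argue $\sigma \in \opt(\fix(m,b))$. I would argue by contradiction: suppose $\sigma \notin \opt(\fix(m,b))$, so there is some strategy $\tau$ with $V_{\fix(m,b)}(\tau, q) > V_{\fix(m,b)}(\sigma, q)$ for some state $q$. By Proposition~\ref{prp:fix-as-good} we may assume without loss of generality that $\tau \in \strg(b)$ (if $\tau \notin \strg(b)$, replace it by the $\strg(b)$-strategy that is at least as good at every state). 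Now apply Proposition~\ref{prp:fix-same} twice: since both $\sigma$ and $\tau$ lie in $\strg(b)$, we have $V_{\fix(m,b)}(\sigma, q') = V_m(\sigma, q')$ and $V_{\fix(m,b)}(\tau, q') = V_m(\tau, q')$ for all $q'$. Hence $V_m(\tau, q) > V_m(\sigma, q)$, contradicting the optimality of $\sigma$ in $m$. Therefore $\sigma \in \opt(\fix(m,b))$, completing this inclusion.

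Combining the two inclusions yields the set equality. The only mildly delicate point is the ``without loss of generality'' step invoking Proposition~\ref{prp:fix-as-good} to push $\tau$ into $\strg(b)$; one must check that replacing $\tau$ by its $\strg(b)$-improved counterpart preserves the strict inequality at \emph{some} state, but this is fine because the improved strategy dominates $\tau$ pointwise and $\tau$ already strictly beats $\sigma$ somewhere — either $\tau$ itself already lies in $\strg(b)$, or its replacement strictly beats $\sigma$ at the same witnessing state (or elsewhere). I expect this bookkeeping to be the main, though minor, obstacle; everything else is a direct appeal to the previously proved properties of $\fix$.
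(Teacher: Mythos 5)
Your proof is correct and takes essentially the same route as the paper: one inclusion is immediate from Proposition~\ref{prp:opt-in-strg}, and the other is the contrapositive of the paper's key step, which shows that a strategy in $\strg(b)$ that is suboptimal in $\fix(m,b)$ cannot be optimal in $m$, using Proposition~\ref{prp:fix-same} to transfer values. The only cosmetic difference is that where the paper bounds the value of the dominating witness strategy via Proposition~\ref{prp:fix-less} (so no ``without loss of generality'' step is needed), you push the witness into $\strg(b)$ via Proposition~\ref{prp:fix-as-good} and apply Proposition~\ref{prp:fix-same} again; both handle that step correctly.
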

\begin{proof}
Consider the set $\strglessopt{m}{b} = \strg(b) \setless \opt(\fix(m,b))$.
For all $\sigma$ in $\strglessopt{m}{b}$, $\sigma$ is in $\strg(b)$ but not $\opt(\fix(m,b))$.
By being in $\strg(b)$, $V_{\fix(m,b)}(\sigma, q) = V_m(\sigma, q)$ by Proposition~\ref{prp:fix-same}.
Thus, since $\sigma$ is not in $\opt(\fix(m,b))$, $\sigma$ is not in $\opt(m)$ either by Proposition~\ref{prp:fix-less}.
This means that $\strglessopt{m}{b} \cap \opt(m)$ is empty.

Furthermore, $\opt(\fix(m,b)) \subseteq \strg(b)$ by Proposition~\ref{prp:opt-in-strg}.  Thus,
$\strg(b)  = \opt(\fix(m,b)) \cup (\strg(b) \setless \opt(\fix(m,b))) = \opt(\fix(m,b)) \cup \strglessopt{m}{b}$.
Thus,
\begin{align*}
\strg(b) \cap \opt(m)
&= (\opt(\fix(m,b)) \cup \strglessopt{m}{b}) \cap \opt(m) \\ 
&= (\opt(\fix(m,b)) \cap \opt(m)) \cup (\strglessopt{m}{b} \cap \opt(m))\\
&= \opt(\fix(m,b)) \cap \opt(m) \\
\end{align*}
\end{proof}

\begin{proposition}\label{prp:value-check}
For all environment models $m$ and behaviors $b$,
$\opt(m) \cap \opt(\fix(m,\be))$ is empty if and only if for all $q$, $\max_{\sigma} V_{\fix(m,\be)}(\sigma,q) \neq \max_{\sigma} V_m(\sigma, q)$.
\end{proposition}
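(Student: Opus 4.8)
The plan is to prove the two implications of the biconditional separately, writing $V^*_m(q)$ for $\max_{\sigma} V_m(\sigma,q)$ and $V^*_{\fix(m,b)}(q)$ for $\max_{\sigma} V_{\fix(m,b)}(\sigma,q)$, and freely using the facts already in hand: $V_{\fix(m,b)}(\sigma,\cdot) \le V_m(\sigma,\cdot)$ for every $\sigma$ (Proposition~\ref{prp:fix-less}), $V_{\fix(m,b)}(\sigma,\cdot) = V_m(\sigma,\cdot)$ whenever $\sigma \in \strg(b)$ (Proposition~\ref{prp:fix-same}), $\opt(\fix(m,b)) \subseteq \strg(b)$ (Proposition~\ref{prp:opt-in-strg}), $V_{\fix(m,b)}(\sigma,\cdot) = V_m(\sigma,\cdot)$ for $\sigma \in \opt(\fix(m,b))$ (Proposition~\ref{prp:opt-equal}), and the standard MDP fact (recalled in the appendix) that every MDP admits a strategy optimal simultaneously at all states, with the common optimal value characterised by the Bellman optimality equation.

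The easy implication is that a non-empty intersection forces the two optimal value functions to agree everywhere (hence not to differ at every state). Indeed, take $\sigma \in \opt(m) \cap \opt(\fix(m,b))$; then for every state $q$ we have $V^*_{\fix(m,b)}(q) = V_{\fix(m,b)}(\sigma,q)$ since $\sigma$ is optimal for $\fix(m,b)$, this equals $V_m(\sigma,q)$ by Proposition~\ref{prp:opt-equal}, and this equals $V^*_m(q)$ since $\sigma$ is optimal for $m$. So $V^*_{\fix(m,b)}(q) = V^*_m(q)$ at every $q$, and contrapositively, if the two optimal values differ at every state then the intersection is empty. I expect this direction to be routine.

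For the converse I would fix a strategy $\sigma^* \in \opt(\fix(m,b))$ (which exists), observe $\sigma^* \in \strg(b)$ by Proposition~\ref{prp:opt-in-strg}, hence $V_m(\sigma^*,q) = V_{\fix(m,b)}(\sigma^*,q) = V^*_{\fix(m,b)}(q)$ for all $q$ by Proposition~\ref{prp:fix-same}, while $V^*_{\fix(m,b)} \le V^*_m$ pointwise by Proposition~\ref{prp:fix-less}. The target is then: if $V^*_{\fix(m,b)}$ agrees with $V^*_m$ somewhere, then $\sigma^*$ is globally optimal for $m$, so it lies in the intersection. This is the main obstacle. The function $v := V_m(\sigma^*,\cdot)$ satisfies the one-step inequality $v(q) \le \max_{a}\,[\,r(q,a) + \gamma\sum_{q'} t(q,a)(q')\,v(q')\,]$ for every $q$, because the $\fix$-reward $r'$ is dominated by $r$ inside the $\max$ yet is realised by $\sigma^*$'s own choices; hence $v \le \mathcal{T}_m v \le \mathcal{T}_m^2 v \le \cdots \to V^*_m$ where $\mathcal{T}_m$ is the Bellman optimality operator. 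The delicate point is upgrading ``$v$ meets $V^*_m$ at one state'' to ``$v = V^*_m$ everywhere'': I would try to do this by tracking where the above chain is forced to be tight, using that the only rewards altered by $\fix$ sit on the states of $b$, at which $\sigma^*$ plays exactly the $b$-actions, so $\sigma^*$ already realises $V^*_{\fix(m,b)}$. If that argument genuinely needs more, I would strengthen the hypothesis to the (in the auditing setting, harmless) assumption that every state of $m$ is reachable, or restate the lemma in terms of the value function from the state at which the audited behavior begins; then Propositions~\ref{prp:fix-less} and~\ref{prp:fix-same} immediately give $V_m(\sigma^*,\cdot) = V^*_m$, putting $\sigma^* \in \opt(m) \cap \opt(\fix(m,b))$ and completing the biconditional.
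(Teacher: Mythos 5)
Your first direction is fine and is exactly the paper's: a common $\sigma^*$ forces $\max_\sigma V_{\fix(m,b)}(\sigma,q)=\max_\sigma V_m(\sigma,q)$ at every $q$ via Proposition~\ref{prp:opt-equal}. The gap is in the converse, and it is precisely the step you flag as delicate. The paper never attempts the upgrade from ``the optimal values agree at one state'' to ``$\sigma^*$ is optimal for $m$ everywhere.'' Its proof of the converse takes as hypothesis that the two optimal value functions agree at \emph{every} state; under that hypothesis the chain you already have (any $\sigma^*\in\opt(\fix(m,b))$ lies in $\strg(b)$ by Proposition~\ref{prp:opt-in-strg}, so $V_m(\sigma^*,q)=V_{\fix(m,b)}(\sigma^*,q)=\max_\sigma V_{\fix(m,b)}(\sigma,q)=\max_\sigma V_m(\sigma,q)$ for all $q$ by Propositions~\ref{prp:fix-same} and~\ref{prp:opt-equal}) immediately gives $\sigma^*\in\opt(m)$, with no Bellman-operator argument needed. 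In other words, what the paper actually proves is ``$\opt(m)\cap\opt(\fix(m,b))$ is non-empty iff the optimal values agree at every state,'' and your second chain already proves that verbatim.

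Your instinct that the ``agrees somewhere, hence everywhere'' step cannot be pushed through in general is correct, so the sketched tightness-propagation argument fails rather than merely needing polish. Concretely, let $m$ have two mutually unreachable states $q_1,q_2$; at $q_1$ there is an action $a$ with reward $1$ (self-loop) and the logged action $c$ with reward $0$ (self-loop); at $q_2$ only zero-reward self-loops; take $b=[q_1,c,q_1]$. Then $\max_\sigma V_{\fix(m,b)}(\sigma,q_2)=0=\max_\sigma V_m(\sigma,q_2)$, yet every $\sigma\in\opt(\fix(m,b))\subseteq\strg(b)$ plays $c$ at $q_1$ and hence is not in $\opt(m)$; the intersection is empty although the values agree at $q_2$. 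So agreement at a single state does not propagate, and the direction you set yourself (negating ``differ at all $q$'' to ``agree at some $q$'') genuinely needs an extra reachability/relevance assumption of the kind you mention as a fallback---it is not obtainable from Propositions~\ref{prp:fix-less}--\ref{prp:opt-equal} alone. To match the paper, drop the Bellman detour and prove the converse from the hypothesis of equality at all states.
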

\begin{proof}
Suppose that $\opt(m) \cap \opt(\fix(m,\be))$ is not empty.  Then there exists $\sigma^*$ in both of them.  
Thus, 
\begin{align}
\max_\sigma V_{\fix(m,\be)}(\sigma, q)
&= V_{\fix(m,\be)}(\sigma^*,q) \label{ln:value-check.sigma-opt-i}\\
&= V_m(\sigma^*, q) \label{ln:value-check.use-opt-equal-i}\\
&= \max_{\sigma} V_m(\sigma, q)\label{ln:value-check.sigma-opt-ii}
\end{align}
where line~\ref{ln:value-check.use-opt-equal-i} follows from Proposition~\ref{prp:opt-equal} and lines~\ref{ln:value-check.sigma-opt-i} and~\ref{ln:value-check.sigma-opt-ii} follow from $\sigma^*$ being in both $\opt(m)$ and $\opt(\fix(m,\be))$.

Suppose that for all $q$, $\max_{\sigma} V_m(\sigma, q_0) = \max_{\sigma} V_{\fix(m,\be)}(\sigma, q_0)$.
Let $\sigma^*$ be in $\opt(\fix(m,\be))$.
For all $q$,
\begin{align}
V_m(\sigma^*, q)
&= V_{\fix(m,\be)}(\sigma^*,q) \label{ln:value-check.use-opt-equal-ii}\\
&= \max_{\sigma} V_{\fix(m,\be)}(\sigma, q_0) \label{ln:value-check.its-opt}\\
&= \max_{\sigma} V_m(\sigma, q)
\end{align}
where line~\ref{ln:value-check.use-opt-equal-ii} follows from Proposition~\ref{prp:opt-equal} and line~\ref{ln:value-check.its-opt} from $\sigma^* \in \opt(\fix(m,\be))$.
Thus, $\sigma^*$ is in $\opt(m)$ and $\opt(m) \cap \opt(\fix(m,\be))$ is not empty.
\end{proof}

\section{Proof of Theorem~\ref{thm:algo-correct}}

Line 05 will return \emph{true} if there exists $i$ such that $a_{i+1} \neq \nothing$ and $Q^*_m(q_i,a_{i+1}) \leq 0$.  By Lemma~\ref{lem:useless-check}, this implies that $\langle q_i, a_{i+1}$ is in $U_m$.  By Lemma~\ref{lem:two-steps}, this implies that $\mathsf{log}^{-1}(b) \cap \behv^*(m)$ is empty under condition (1).

Lines 06--16 constructs $m' = \fix(m,b)$.  It constructs $r'$ from $r$ by first setting $r' = r$.  On lines 13--16, it then sets $r'(q_i,k)$ to be $-\omega$ for all $k$ such that $k \neq a_{i+1}$.  Thus, $r'(q_i,a_{i+1})$ will be left as $r(q_i,a_{i+1})$ as needed.

If Line 05 does not Line 21 will return \emph{false} if there exists $j$ such that $V^*_m(j) = V^*_m(j)$.
In this case, it cannot be that for all $q$, $\max_{\sigma} V_m(\sigma, q_0) = \max_{\sigma} V_{\fix(m,\be)}(\sigma, q_0)$.  Thus, by Lemma~\ref{lem:fix-value-check}, $\strg(b) \cap \opt(m)$ is not empty and condition (2) is false of Lemma~\ref{lem:two-steps}.  Since the function would had returned already at Line 05 if condition (1) were true, we know it is false.   Thus, by Lemma~\ref{lem:two-steps},  $\mathsf{log}^{-1}(b) \cap \behv^*(m)$ is not empty.

If Line 22 is reached, \emph{true} is returned.  This can only happen if condition (2) is true.  This implies that $\mathsf{log}^{-1}(b) \cap \behv^*(m)$ is empty by Lemma~\ref{lem:two-steps}.

Thus, the algorithm is correct whether it returns \emph{true} or \emph{false}.

\end{document}